\documentclass[sigconf]{acmart}

\AtBeginDocument{%
  \providecommand\BibTeX{{%
    \normalfont B\kern-0.5em{\scshape i\kern-0.25em b}\kern-0.8em\TeX}}}

\setcopyright{iw3c2w3}
\copyrightyear{2020}
\acmYear{2020}
\acmConference[WWW '20]{Proceedings of The Web Conference 2020}{April 20--24, 2020}{Taipei, Taiwan}
\acmBooktitle{Proceedings of The Web Conference 2020 (WWW '20), April 20--24, 2020, Taipei, Taiwan}
\acmPrice{}
\acmDOI{10.1145/3366423.3380264}
\acmISBN{978-1-4503-7023-3/20/04}

\usepackage{booktabs} 


\usepackage{libertine}

\usepackage[ruled, noend, noline, linesnumbered]{algorithm2e}

\usepackage{bbm}
\usepackage{mathrsfs}
\usepackage{multicol}
\usepackage{multirow}
\usepackage{epstopdf}
\usepackage{color}
\usepackage{xcolor}
\usepackage{framed}
\usepackage{csvsimple}
\usepackage{longtable}
\usepackage[none]{hyphenat}


\usepackage[font={small}]{caption}
\usepackage{subcaption}

\newcommand{\mainalg}{Inverse-TS}
\newcommand{\sample}{Sample}
\newcommand{\modifiedts}{PEANUTS}
\newcommand{\ptsfinder}{PrefixedTur\'anShadowFinder}

\newcommand{\funckoc}{Func-$(k,1)$-Clique}
\newcommand{\funcktc}{Func-$(k,2)$-Clique-Type1}
\newcommand{\funcktcc}{Func-$(k,2)$-Clique-Type2}

\newlength{\dhatheight}






\newtheorem{theorem}{Theorem}[section]
\newtheorem{claim}[theorem]{Claim}





\newcommand{\ignore}[1]{}

\DeclareRobustCommand*\cal{\@fontswitch\relax\mathcal}

\newcommand{\cK}{\mathcal{K}}

\newcommand{\R}{\mathbb R}

\newcommand{\eps}{\varepsilon}

\newcommand{\bS}{\boldsymbol{S}}
\newcommand{\bT}{\boldsymbol{T}}

\newcommand{\NN}{\mathbb{N}}

\newcommand{\Exp}{\EX}

\newcommand{\EX}{\hbox{\bf E}}


\newcommand{\sz}[1]{\mathrm{size}(#1)}


\newcommand{\hc}{$h$-clique}
\newcommand{\hcs}{$h$-cliques}
\newcommand{\kc}{$k$-clique}
\newcommand{\kcs}{$k$-cliques}
\newcommand{\nkcs}{near-cliques}
\newcommand{\nkc}{near-clique}
\newcommand{\koc}{$(k,1)$-clique}
\newcommand{\kocs}{$(k,1)$-cliques}
\newcommand{\ktc}{$(k,2)$-clique}
\newcommand{\ktcs}{$(k,2)$-cliques}
\newcommand{\ts}{Tur\'anShadow}

\newcommand{\ps}{Prefixed-Shadow}

\newcommand{\pts}{Prefixed-Tur\'an-Shadow}



\newcommand{\Sec}[1]{\hyperref[sec:#1]{\S\ref*{sec:#1}}} 
\newcommand{\Eqn}[1]{\hyperref[eqn:#1]{(\ref*{eqn:#1})}} 
\newcommand{\Fig}[1]{\hyperref[fig:#1]{Fig.\,\ref*{fig:#1}}} 
\newcommand{\Tab}[1]{\hyperref[tab:#1]{Tab.\,\ref*{tab:#1}}} 
\newcommand{\Thm}[1]{\hyperref[thm:#1]{Theorem\,\ref*{thm:#1}}} 
\newcommand{\Fact}[1]{\hyperref[fact:#1]{Fact\,\ref*{fact:#1}}} 
\newcommand{\Lem}[1]{\hyperref[lem:#1]{Lemma\,\ref*{lem:#1}}} 
\newcommand{\Prop}[1]{\hyperref[prop:#1]{Prop.~\ref*{prop:#1}}} 
\newcommand{\Prob}[1]{\hyperref[prob:#1]{Prolem~\ref*{prob:#1}}} 
\newcommand{\Cor}[1]{\hyperref[cor:#1]{Corollary~\ref*{cor:#1}}} 
\newcommand{\Conj}[1]{\hyperref[conj:#1]{Conjecture~\ref*{conj:#1}}} 
\newcommand{\Def}[1]{\hyperref[def:#1]{Definition~\ref*{def:#1}}} 
\newcommand{\Alg}[1]{\hyperref[alg:#1]{Alg.~\ref*{alg:#1}}} 
\newcommand{\Ex}[1]{\hyperref[ex:#1]{Ex.~\ref*{ex:#1}}} 
\newcommand{\Clm}[1]{\hyperref[clm:#1]{Claim~\ref*{clm:#1}}} 
\newcommand{\Obs}[1]{\hyperref[obs:#1]{Obs.~\ref*{obs:#1}}} 
\newcommand{\Step}[1]{\hyperref[step:#1]{Step~\ref*{step:#1}}} 

\sloppy
\begin{document}

\title{Provably and Efficiently Approximating Near-cliques using the Tur\'an Shadow: PEANUTS} 

\author {Shweta Jain} 
\affiliation{University of California, Santa Cruz}
\affiliation{Santa Cruz, CA, USA}
\email{sjain12@ucsc.edu}

\author {C. Seshadhri} 
\affiliation{University of California, Santa Cruz}
\affiliation{Santa Cruz, CA}
\email{sesh@ucsc.edu}

\begin{abstract} Clique and near-clique counts are important graph properties with applications in graph generation, graph modeling, graph analytics, community detection among others. They are the archetypal examples of dense subgraphs. While there are several different definitions of near-cliques, most of them share the attribute that they are cliques that are missing a small number of edges. Clique counting is itself considered a challenging
problem. Counting near-cliques is significantly harder more so since the search space for near-cliques is orders of magnitude larger than that of cliques.  

We give a formulation of a near-clique as a clique that is missing a constant number of edges. We exploit the fact that a near-clique contains a smaller clique, and use techniques
for clique sampling to count near-cliques. This method allows us 
to count near-cliques with 1 or 2 missing edges, in graphs with tens of millions of edges. 
To the best of our knowledge,
there was no known efficient method for this problem, and we obtain a $10x-100x$ speedup over existing algorithms for counting near-cliques.

Our main technique is a space efficient adaptation of the Tur\'{a}n Shadow sampling approach,
recently introduced by Jain and Seshadhri (WWW 2017). This approach constructs a large
recursion tree (called the Tur\'{a}n Shadow) that represents cliques in a graph. We design a novel algorithm that builds an estimator for near-cliques, using a online, compact construction of the Tur\'{a}n Shadow.



\end{abstract}
\begin{CCSXML}
<ccs2012>
<concept>
<concept_id>10003752.10010070</concept_id>
<concept_desc>Theory of computation~Theory and algorithms for application domains</concept_desc>
<concept_significance>500</concept_significance>
</concept>
<concept>
<concept_id>10002950.10003624.10003633.10010918</concept_id>
<concept_desc>Mathematics of computing~Approximation algorithms</concept_desc>
<concept_significance>300</concept_significance>
</concept>
</ccs2012>
\end{CCSXML}

\ccsdesc[500]{Theory of computation~Theory and algorithms for application domains}
\ccsdesc[300]{Mathematics of computing~Approximation algorithms}

\keywords{Cliques, near-cliques, near-cliques, defective-cliques, Tur\'an Shadow, sampling, graphs}

\maketitle
\section{Introduction} \label{sec:intro}

Subgraph counting is an important tool in graph analysis that goes by many names such as motif
counting, graphlet analysis, and pattern counting. The aim is to count the number of occurrences of a (generally small) subgraph in a much larger graph. Among these subgraphs,
cliques are arguably the most important. Even the simplest clique, the triangle, has
a rich history of algorithms and applications. There has been much recent
focus on counting cliques in large graphs~\cite{DBS18,JaSe17,FFF15, HaSe16,WLRTZG14,BRRA12}.



While clique counts are important, the requirement that every edge in the clique be present is excessively rigid. Data is often noisy or incomplete, and it is likely that
cliques that are missing even an edge or two are significant.
Hence, it is important to also look at counts of patterns that are extremely close to being cliques. 
We will call these structures near-cliques but they are also known as quasi-cliques~\cite{LW08,PYB12} and defective cliques~\cite{YPTG06} and have several applications ranging from clustering to prediction. Recent work on 
has used the fraction of \nkcs{} to \kcs{} to define
higher order variants of clustering coefficients~\cite{YBLG17}.

In the bioinformatics literature, near-cliques (or defective cliques, as they are known) have been used to predict missed protein-protein interactions in noisy PPI networks~\cite{YPTG06} and have been shown to have good predictive performance. An alternative viewpoint of looking at near-cliques views them as dense subgraphs. Mining dense subgraphs is an important problem with many applications in Network Analysis. ~\cite{fratkin06,sariyuce15,kumar99,chen10,alvarez06}

Counting cliques is already challenging, and counting near-cliques introduces more
challenges. Most importantly, near-cliques do not enjoy the recursive structural property of cliques - that a subset of a clique is also a clique. This rules out most recursive
backtracking algorithms for clique counting. Moreover, empirical evidence suggests that the number of \nkcs{} in real world datasets is order of magnitudes higher than that of cliques, making the task of counting them equally difficult if not more. \Fig{ncs-vs-cliques} shows the ratio of 3 different types of \nkcs{} to the number of \kcs{} for $k=5$ for 4 real world graphs. The number of \nkcs{} is often ten times higher than the number of \kcs{}. 

There are several different ways of defining near-cliques. ~\cite{Tsourakakis13} define $\alpha$-quasi-cliques as cliques that are missing a $\alpha$ fraction of the edges. Other formulations define them in terms of graph properties like degree of every vertex in the near-clique or diameter of the near-clique. A set $S$ of size $n$ is called a $k-$plex if every member of the set is connected to $n-k$ others. A $k$-club is a subset $S$ of nodes such that in the subgraph induced by $S$, the diameter is $k$ or less.  All these formulations have the common property that they represent a clique that is missing a few edges. We formulate near-cliques in a slightly different way, as cliques that are missing 1 or 2 edges. The advantage of defining them this way is that they allow us to leverage the machinery of clique counting. Every such near-clique has a smaller clique contained in it. By sampling the smaller cliques and using them as hints to find near-cliques, we give an estimate for the total number of near-cliques. In \Sec{practice} we show an interesting application of such near-cliques where we run our algorithm on a citation network to discover papers that perhaps should have cited other papers but did not.

\begin{figure*}[t!]   
       \begin{subfigure}[b]{0.24\textwidth}
        \centering
        \includegraphics[width=\textwidth]{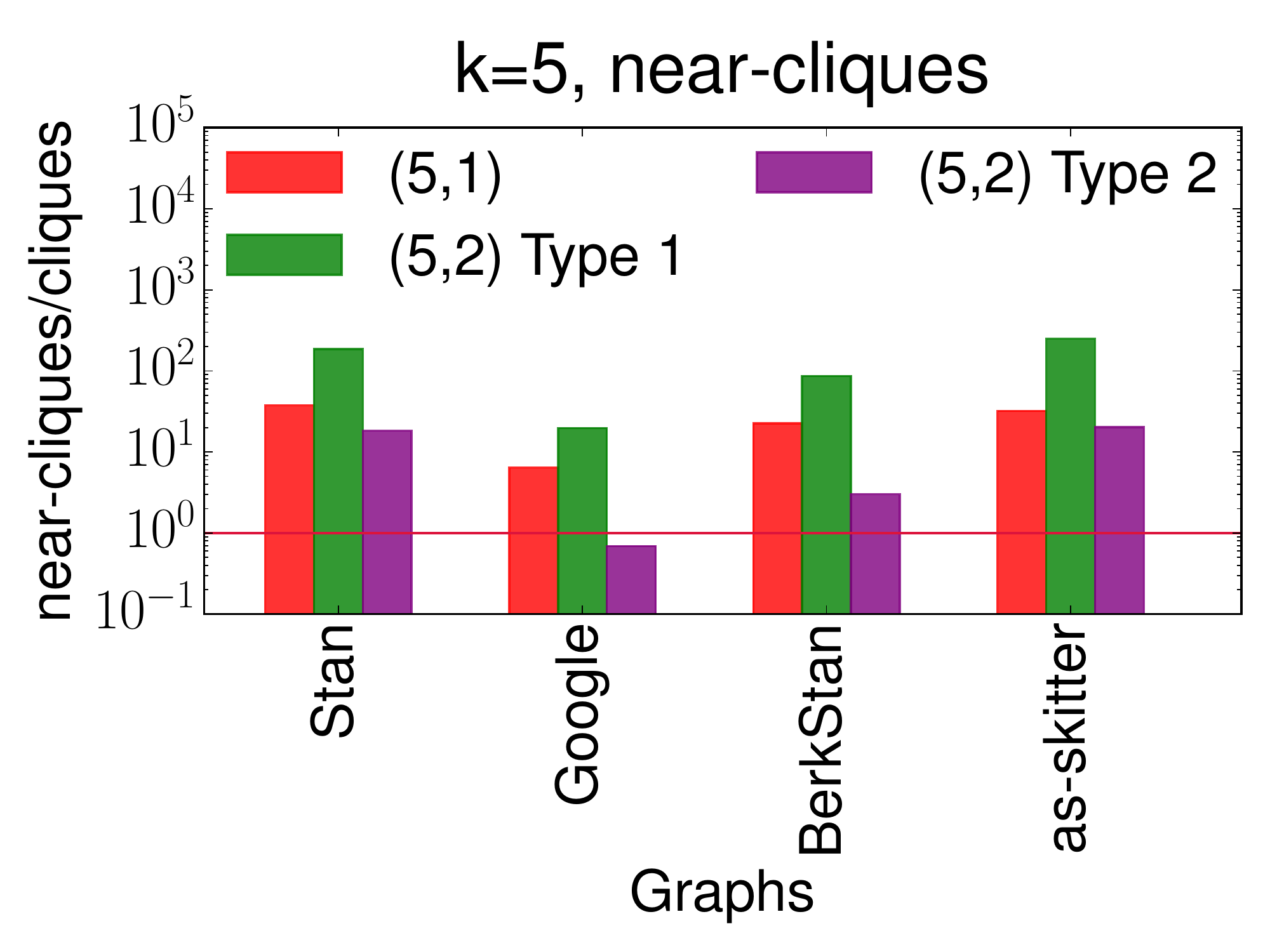}
        \caption{Ratios}
        \label{fig:ncs-vs-cliques}
        \end{subfigure}
        ~
        \begin{subfigure}[b]{0.24\textwidth}
        \centering
        \includegraphics[width=\textwidth]{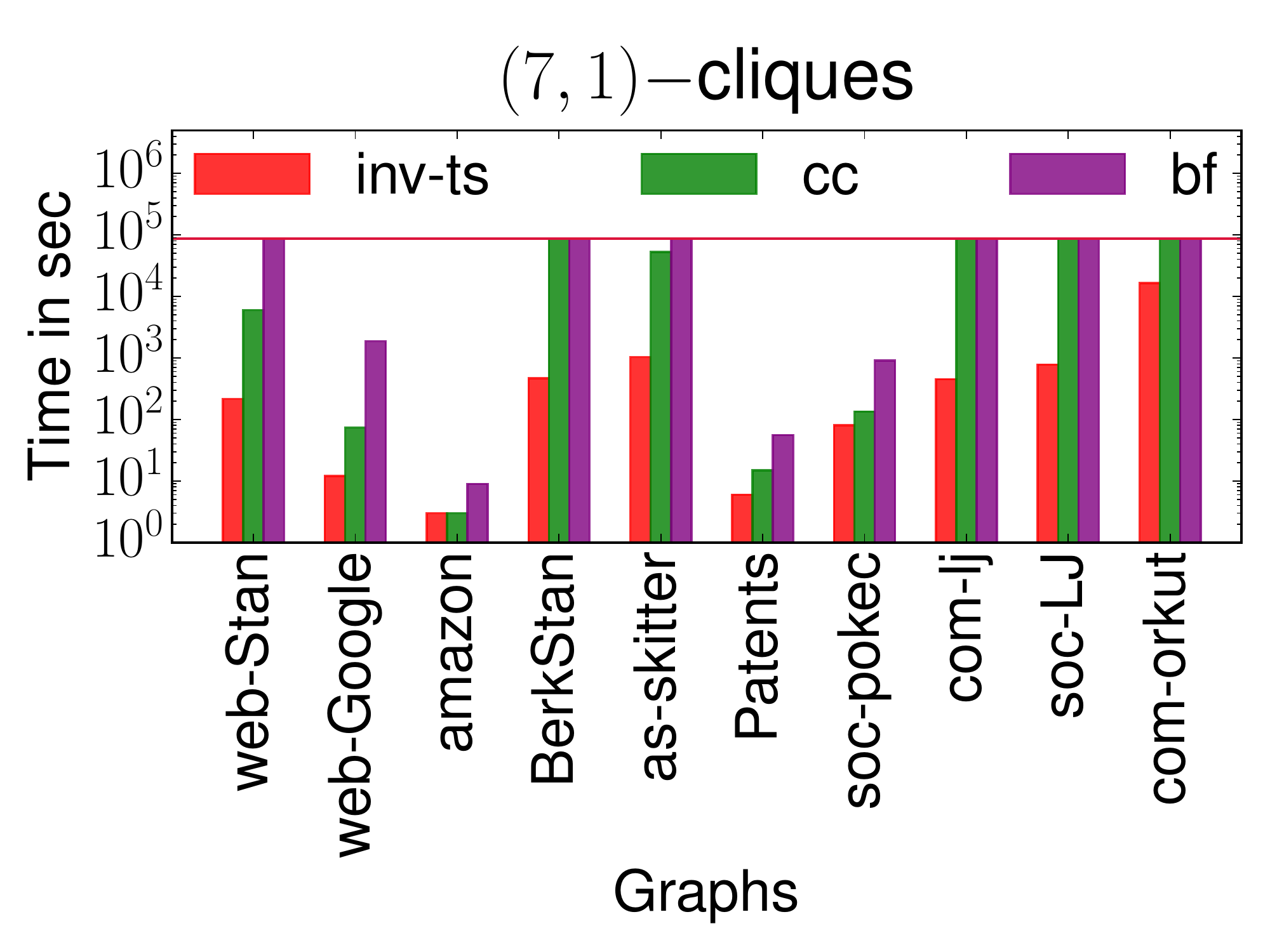}
        \caption{Timings}
        \label{fig:timing-7-nc}
        \end{subfigure}
        ~   
        \begin{subfigure}[b]{0.24\textwidth}
        \centering
        \includegraphics[width=\textwidth]{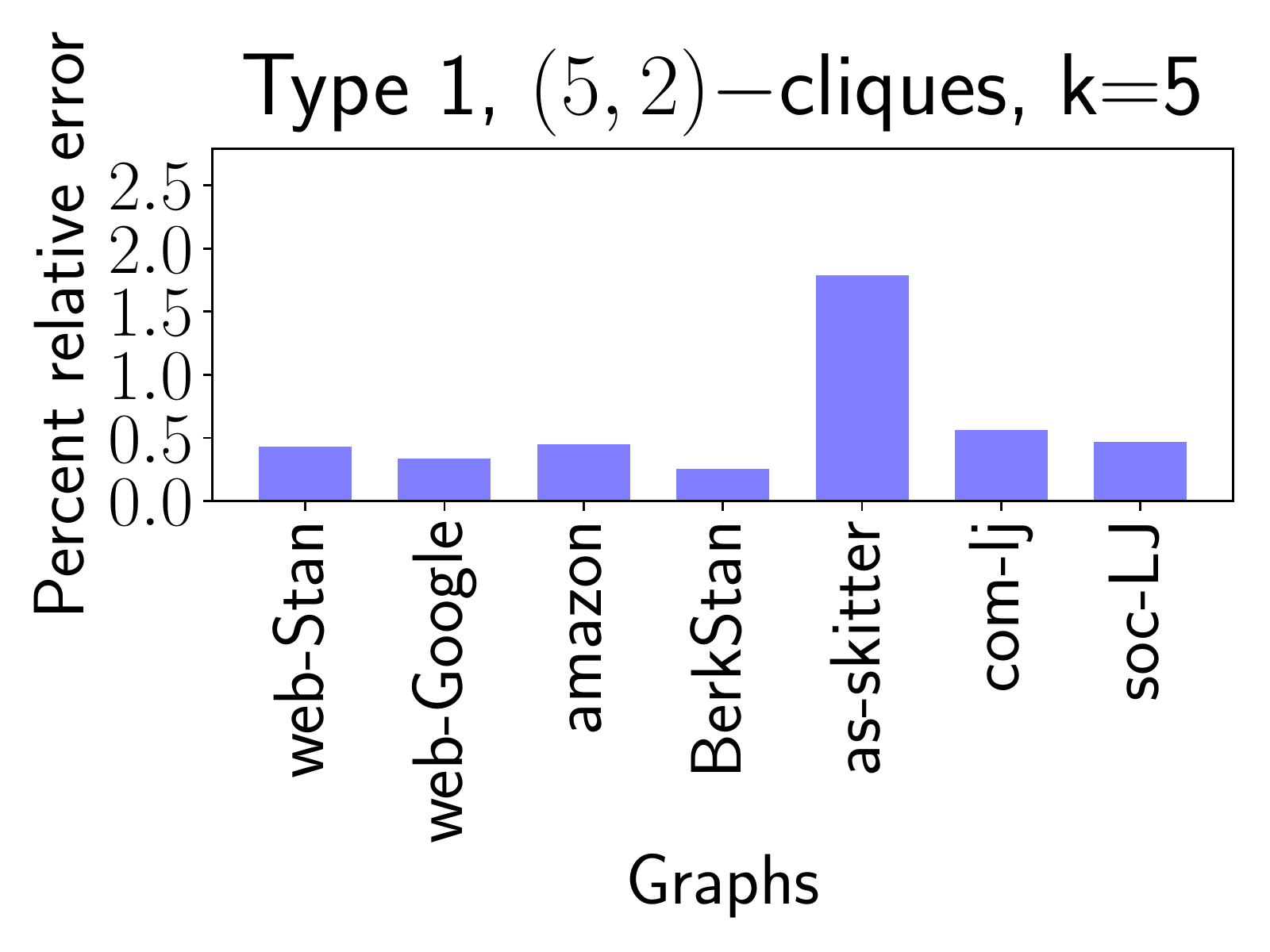}
        \caption{Error}
        \label{fig:rel-error-5-nc2}
        \end{subfigure}
        ~
        \begin{subfigure}[b]{0.24\textwidth} 
        \centering
        \includegraphics[width=\textwidth]{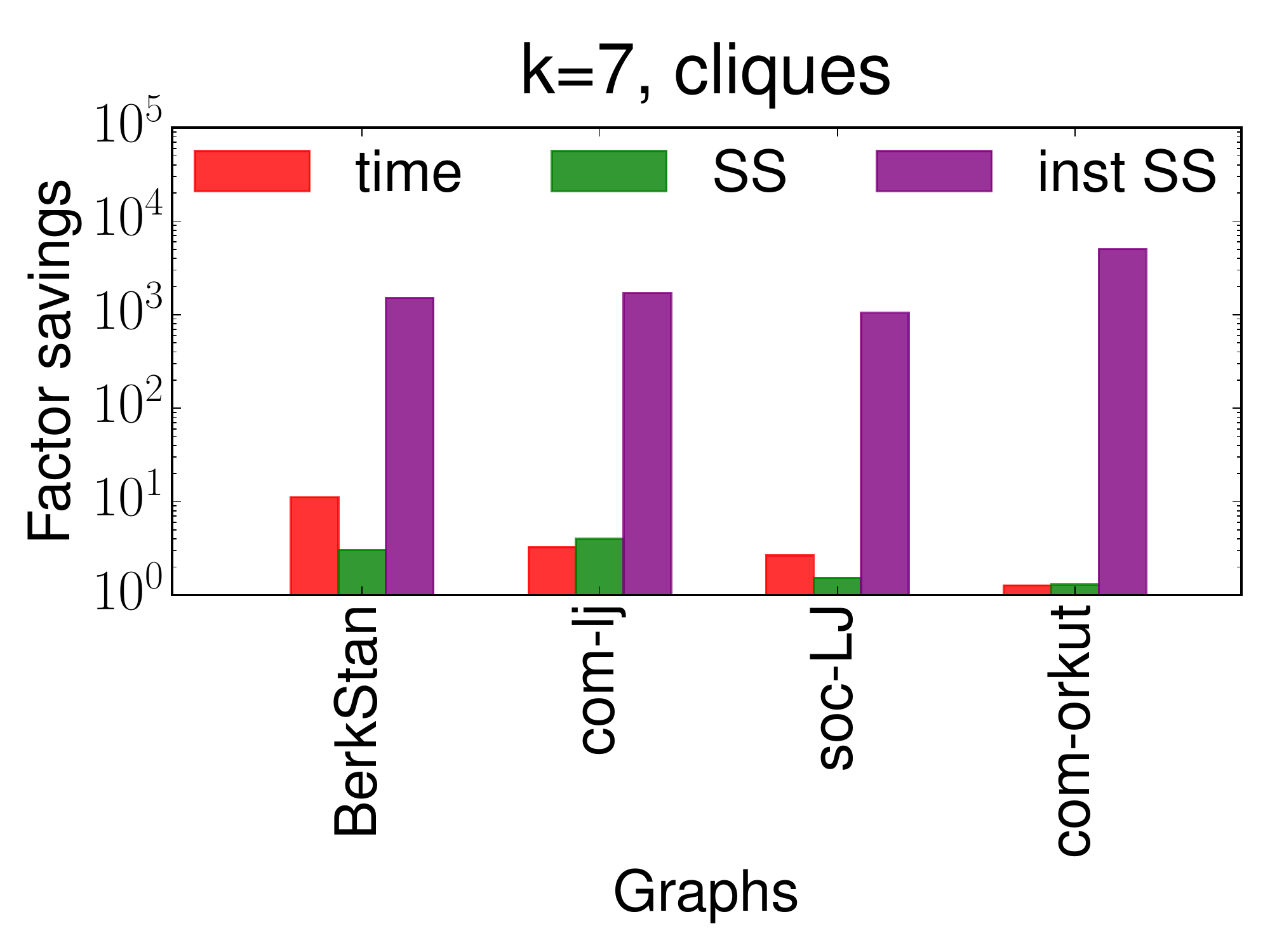}
        \caption{TS vs \mainalg{}}
        \label{fig:ts-vs-inv-ts}
        \end{subfigure} 

           \caption{\Fig{ncs-vs-cliques} shows the ratio of number of different types of \nkcs{} to \kcs{} for $k=5$ in four real world graphs. The red line indicates ratio = $1$. In most cases the number of \nkcs{} is at least of the same order of magnitude as number of \kcs{}, if not more. \Fig{timing-7-nc} shows the time required by \mainalg{} (inv-ts), color-coding (cc) and brute force (bf) to estimate the number of $(7,1)$-cliques in 10 real world graphs. The $y-$axis shows time in seconds on a log scale. The red line indicates 86400 seconds (24 hours). All experiments that ran for more than 24 hours were terminated. \mainalg{} terminated in minutes in all cases except com-orkut, giving a speedup of anywhere between 3x-100x. \Fig{rel-error-5-nc2} shows the percentage error in the estimates for Type 1 \ktcs{} for $k=5$ obtained using \mainalg{}. As we can see, the error is $<2\%$ and in most cases $<1\%$. \Fig{ts-vs-inv-ts} shows the savings in time and space when using \mainalg{} (500000 samples) vs when using \ts{} (50000 samples) to estimate the number of 7-cliques in 4 of the largest real world graphs we experimented with. The green bars show the factor savings in the percentage of the Tur\'an Shadow that was explored (factor of 2-10). The purple bar shows the factor saving in the maximum amount of space required for the Tur\'an Shadow at any instant.}
        

\end{figure*}

\begin{figure}
 \begin{subfigure}[b]{0.155\textwidth}
    \centering
    \includegraphics[width=\textwidth]{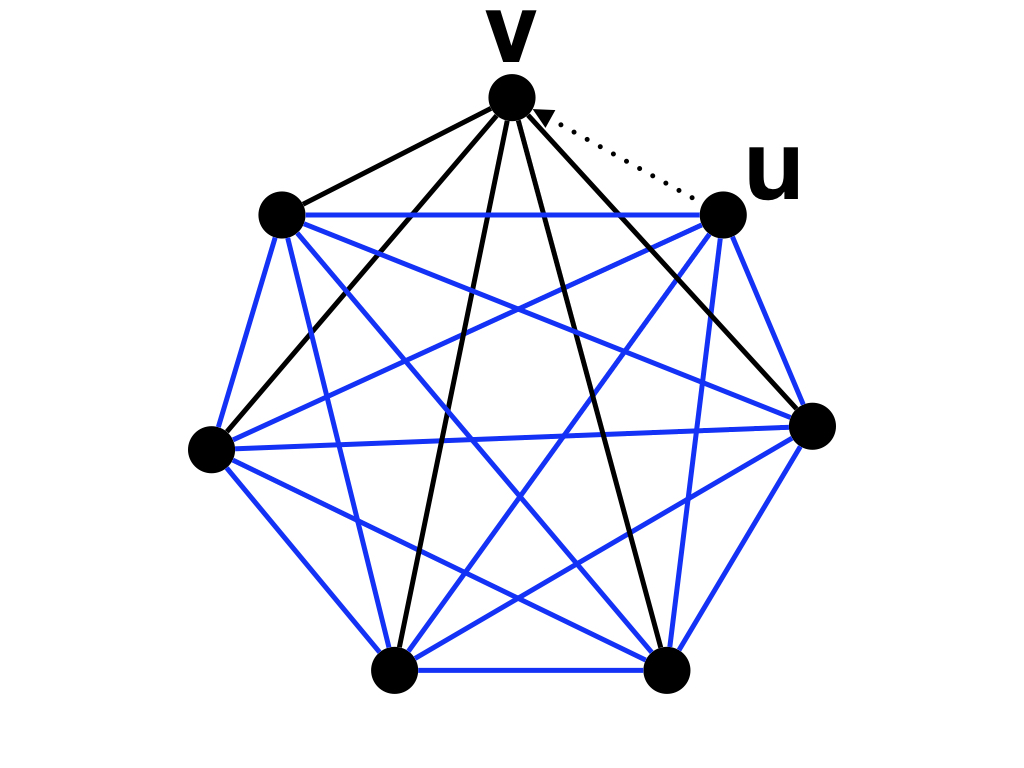}
    \caption{$(7,1)$}
    \label{fig:koc-annotated}
    \end{subfigure}
    \begin{subfigure}[b]{0.155\textwidth}
    \centering
    \includegraphics[width=\textwidth]{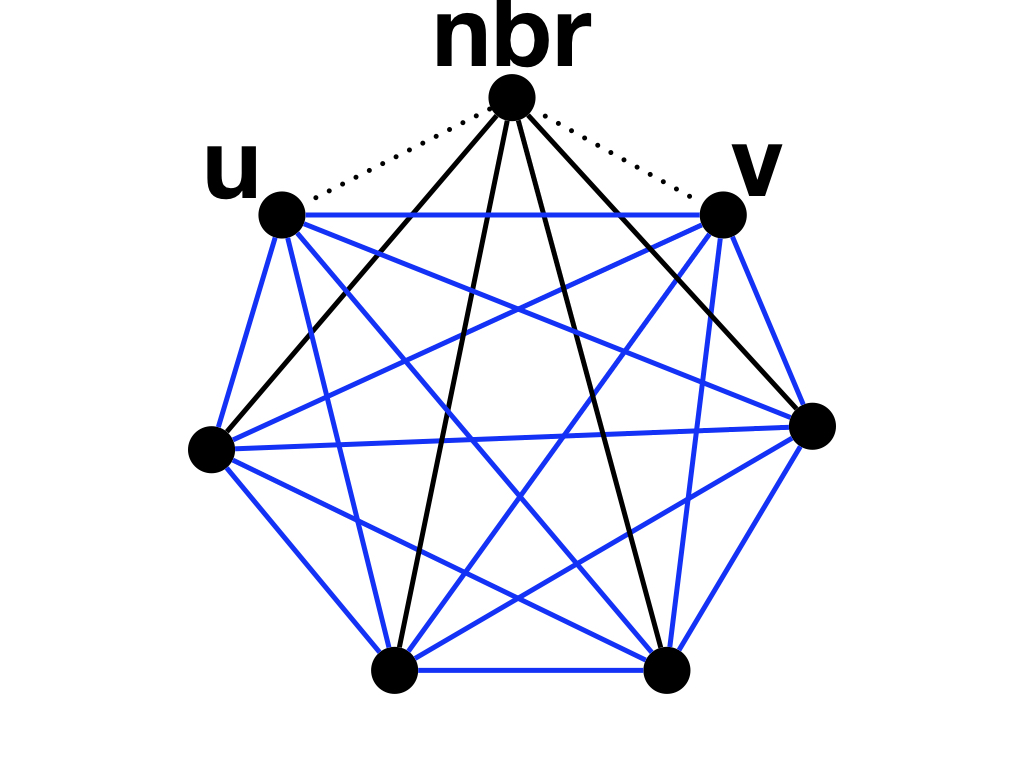}
    \caption{Type 1 $(7,2)$}
    \label{fig:type1-annotated}
    \end{subfigure}
    \begin{subfigure}[b]{0.156\textwidth}
    \centering
    \includegraphics[width=\textwidth]{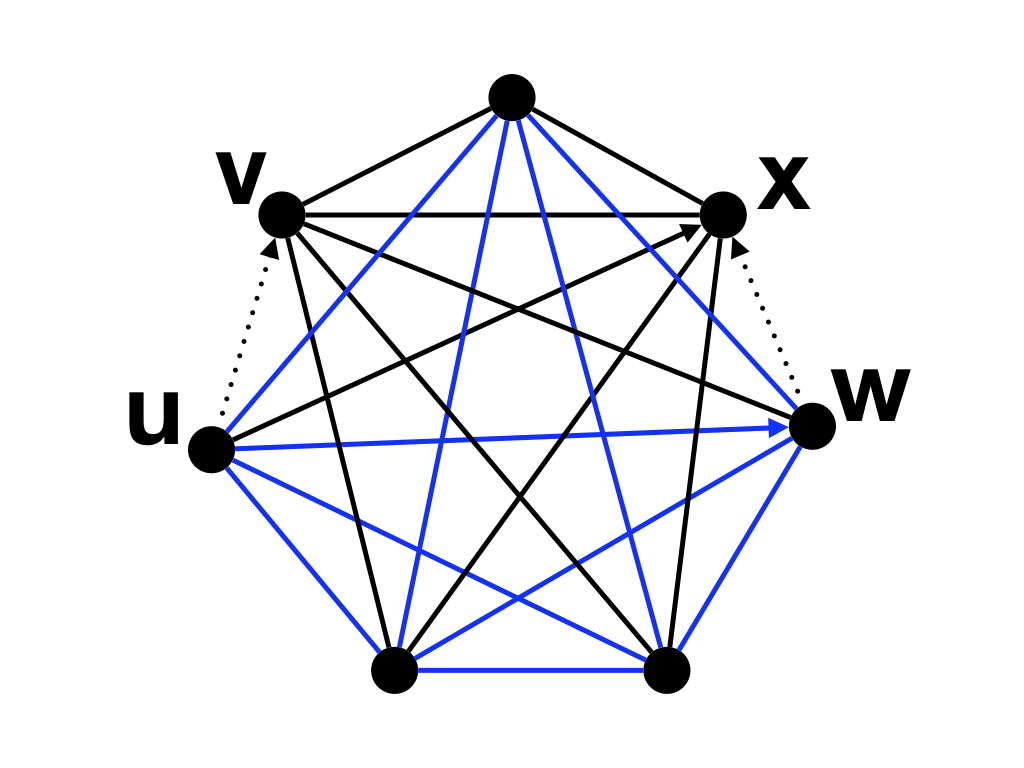}
    \caption{Type 2 $(7,2)$}
    \label{fig:type2-annotated}
    \end{subfigure}
\caption{Near-$7$-cliques. Dotted lines indicate the missing edges. Blue lines mark the contained clique.}
\label{fig:types-annotated}        
\end{figure}

\subsection{Problem description} \label{sec:problem}

A \kc{} is a set of $k$ vertices such that there is an edge between all pairs of vertices belonging to the set. We define \koc{} and \ktc{} below. For the rest of this paper, whenever we say near-cliques, we will imply the following 3 kinds of near-cliques (unless mentioned otherwise)

\begin{definition}
A \koc{} is a \kc{} with exactly 1 edge missing. 
\end{definition}

For \ktcs, there are 2 configurations possible - one in which the missing edges share a vertex, and one in which they don't.

\begin{definition}
A Type 1 \ktc{} is a \kc{} with exactly 2 edges missing such that the missing edges share a vertex. 
\end{definition}

\begin{definition}
A Type 2 \ktc{} is a \kc{} with exactly 2 edges missing such that the missing edges do not share a vertex. 
\end{definition}



The different types of \nkcs{} are shown in \Fig{types-annotated}. We want to estimate the number of \kocs{} and \ktcs{} in $G$. Note that all our near-cliques are induced and obtaining counts of non-induced near-cliques is simply a matter of taking a linear combination of the number of \kcs{} and \nkcs{}. For the sake of brevity, we skip a detailed discussion.



We
stress that we make no distributional assumption on the
graph. All probabilities are over the internal randomness of
the algorithm itself (which is independent of the instance).

\subsection{Our contributions} \label{sec:contribute}
We provide a randomized algorithm based on \ts{} called \modifiedts{}  which estimates the counts of \kocs{} and \ktcs{}. In addition, we also provide a heuristic algorithm called \mainalg{} based on \modifiedts{} which takes roughly the same time as \modifiedts{} (and in some cases, upto 10x less time) but drastically reduces the space required. Our implementation of \mainalg{} on a commodity machine showed significant savings in terms of time in obtaining counts of \nkcs{} over other methods like color-coding and brute force counting and showed consistently low error over 100s of runs of the algorithm.
    
    \textbf{Leveraging cliques for near-cliques:} 
    Data being noisy, cliques are brittle and as a result, number of \nkcs{} is often very large. However, it is not at all clear how one can count their number without looking at every set of $k-$vertices, which is computationally very expensive. \modifiedts{} uses the fact that near-cliques themselves contain cliques, and leverages \ts{} to count near-cliques. There exist algorithms for generic pattern counting which can be used for counting near-cliques but there is no known algorithm dedicated to finding near-cliques that exlploits the clique-like structure of near-cliques to give a faster estimate. 

    \textbf{Extremely fast:} 
    \modifiedts{} is based on the observation that every near-clique contains a smaller clique. Thus, we can use cliques as clues for finding near-cliques. We leverage a fast clique-counting algorithm (\ts{}) to achieve fast and accurate near-clique counting. 
    \Fig{timing-7-nc} shows the time taken by \mainalg{}, color-coding (cc) and brute force (bf) to count the number of $(7,1)$-cliques for a variety of graphs. \mainalg{} is able to estimate their number to within 2\% error in a graph (com-lj) with 4 million vertices and 34 million edges in 452 seconds which is at least 100 times faster than cc and bf. As we will show later, similar performance is found in the estimation of other near-cliques and on other graphs. 

    \textbf{Extremely accurate:} Similar to \ts{}, \mainalg{} uses the seminal result from extremal combinatorics, called Tur\'an's theorem which allows for efficiently sampling cliques, which translates to fast and accurate estimation of the number of near-cliques. \Fig{rel-error-5-nc2} shows the error in the estimate obtained for number of Type 2 $(5, 2)-$cliques (a specific configuration of $(5, 2)-$cliques) in a variety of graphs using \mainalg{}. As we can see, all the errors were within 2\%. Moreover, unlike color-coding, \mainalg{} allows us to control the number of samples we take, and even using 500K samples, \mainalg{} was more accurate and took less time than color-coding (\ref{sec:results}). 

    For many of the graphs we experimented with, the brute force algorithm had not terminated within 1 day and thus was unable to give us ground truth values, but in the cases where the algorithm did terminate, we saw that \mainalg{} gave $<5\%$ error and mostly $<2\%$. For the cases where the brute force algorithm did not terminate, we looked at the output of 100 runs of our algorithm. In all cases, the algorithm showed very good convergence properties (more details in \Sec{results}).


    \textbf{Excellent space efficiency:} 
    \ts{}  requires that the entire shadow be generated and stored, which for a graph with 100s of millions of edges can potentially require large amount of memory. Our practical implementation of \mainalg{} addresses this by removing the separation in the Shadow construction and sampling phases and instead, performs sampling while the shadow is being constructed in an online fashion. This eliminates the need for storing the entire Shadow and consequently gives savings of orders of magnitude in space required. The purple bars in \Fig{ts-vs-inv-ts} show the factor savings in the maximum shadow size required to be stored at any point (instantaneous shadow size or inst SS) for \mainalg{} vs the space required by \ts{}. There is atleast 100x savings in space using \mainalg{}.
\textbf{Comparison with other algorithms:} 
We do a thorough analysis of \mainalg{} by deploying it on a number of real-world graphs of varying sizes. In most cases we observed that \mainalg{} was considerably fast while showing consistently low error over 100s of runs of the algorithm. We also do a thorough comparison of \mainalg{} with other generic pattern-counting algorithms like color-coding. \Fig{timing-7-nc} shows the time required for counting $(7,1)$-cliques by the different methods. Across all of our experiments we observe that \mainalg{} was at least 10 times faster on most graphs as compared to other algorithms. 

\textbf{All code and data available:} All the datasets we used are publicly available at ~\cite{Snap}. In addition, we can readily make the code for our algorithm publicly available if the paper is accepted for publication.


\subsection{Related Work}

Pattern counting, also known as graphlet counting or motif counting has been an important tool for graph analysis. It has been used in bioinformatics~\cite{Milo2002,Wernicke06,Pr07}, social sciences~\cite{HoLe70}, spam detection~\cite{BeBoCaGi08}, graph modeling~\cite{SeKoPi11}, etc. Triangle counting, and more recently, clique counting have gained a lot of attention~\cite{DBS18,JaSe17,FFF15} due to their special role in characterizing real-world graphs. Clique counts have been employed in applications such as discovery of dense subgraphs~\cite{SaSePi14,Ts15}, in topological approaches to network analysis~\cite{SGB16}, graph clustering~\cite{YBLG17} among others. More generally, motif counts have been used in clustering~\cite{YBLG17,TsPaMi16}, evaluation of graph models~\cite{SPR17,SeKoPi11}, classification of graphs~\cite{UganderBK13} etc. 

On the theoretical side, several motif-counting algorithms exist~\cite{ChNi85,YBLG17,CDM17}. On the more practical side, only recently, efficient methods for counting graphlets upto size 5~\cite{WZ+18,PiSeVi17,JhSePi15,HoDe17} have been proposed. Most of these are extensions of triangle counting methods and do not scale. For patterns of larger sizes, two widely used techniques are the MCMC~\cite{HaSe16,WLRTZG14} and color-coding (CC) of~\cite{AlYuZw94}. However, as shown in ~\cite{BCKLP18}, MCMC based methods have poorer accuracy for the same running time than CC and for patterns of sizes greater than 5, CC is also generally quite inefficient, as we will show in our results. Motif counting has been studied in streaming ~\cite{BoDo+08,KMSS12} and distributed settings ~\cite{EEKBD16} and in temporal networks~\cite{PBL17}.

All these methods are geared towards counting arbitrary patterns with upto 6 nodes but none of these methods scale beyond 6 nodes. Moreover, these are generic pattern counting methods that do not utilize the clique-like nature of near-cliques to give more efficient methods. Ours is the first work to do so.



\textbf{Dense subgraph algorithms:} The notion of dense subgraphs as near-cliques was introduced by Tsourakakis et. al. in ~\cite{Tsourakakis13}. There are several different formulations of dense subgraphs, many of which are NP-Hard (indeed, even the problem of finding the densest subgraph on $k$ vertices, known as the densest-$k$-subgraph is NP-Hard ~\cite{SaSePi14}). The algorithms of Andersen and Chellapilla ~\cite{AnCh09}, Rossi et al. ~\cite{RossiG15}, and Tsourakakis et al. ~\cite{Tsourakakis13,Ts15} provide practical algorithms for some of the formulations. However, most of them focus on finding or approximating the densest subgraph rather than giving global stats. 

\section{Main ideas} \label{sec:main}

The starting point of our result is the \ts{} algorithm for estimating the number of \kcs{} in a graph. \ts{} is based on a seminal theorem of Tur\'an and Erd\"os that says that: if the edge density of an $n-$vertex graph is greater than $1 - 1/(k-1)$ (the Turan density), then the graph is guaranteed to have many ($O(n^{k-2})$) $k$-cliques. This implies that if we randomly sample a $k$-vertex set from the graph, the probability of it being a $k$-clique would be high. \ts{} exploits this fact by splitting $G$ into (possibly overlapping) Turan-dense subgraphs such that there is a one-to-one correspondence between the cliques of a specific size in each subgraph, and the number of $k$-cliques in G. The set of all such subgraphs of $G$ is called the Tur\'an Shadow of $G$. Essentially, \ts{} reduces the search space for $k$-cliques in $G$ from 1 large sparse graph to several dense subgraphs. 

More importantly though, for any $h$, \ts{} provides an efficient way of sampling a u.a.r. \hc{} from $G$. Let $C_h$ be the set of all \hcs{} in $G$ and let $f:C_h \rightarrow \R^+$ be a bounded function over all \hcs{}, then we can obtain an unbiased estimate for $F = \sum\limits_{K \in C_h}{f(K)}$ by obtaining the average of $f$ over a set of uniformly sampled \hcs{} and scaling by the total number of \hcs{}. In other words, we can use this clique sampler to obtain an unbiased estimate of the sum (and mean value) of any bounded function over \hcs{}. We exploit this fact to obtain an estimate of the number of \nkcs{}. 

To estimate the number of \kocs{}, we make the following observation: Every \koc{} has exactly two $k-1$-cliques embedded in it. Let $C_{k_{-1}}$ be the set of \kocs{} in $G$ and $C_{k-1}$ be the set of $k-1$-cliques in $G$, and $ \forall K \in C_{k-1}$, let $f(K) = $ number of \kocs{} that clique $K$ is contained in, then $\sum\limits_{K \in C_{k-1}}{f(K)} = 2|C_{k_{-1}}|$. However, since every \koc{} is counted twice, the variance of the estimator can be pretty large. We observe that if the missing edge in a \koc{} is $(u,v), u<v$, exactly one of the $k-1$-cliques contains $u$ and the other contains $v$. In order to reduce the variance, we define $f(K) = $ number of \kocs{} that clique $K$ is contained in, such that $u \in K$ i.e. we break ties based on the direction of the missing edge. With this formulation, $\sum\limits_{K \in C_{k-1}}{f(K)} = |C_{k_{-1}}|$. 

For \ktcs{}, there are 2 possible configurations, as shown in \Fig{types-annotated}. Type 1 consists of exactly one $k-1$-clique embedded in it. Hence, we set $f(K)=$ number of Type 1 \ktcs{} that a given $k-1$-clique $K$ is contained in. Type 2 \ktcs{} are a bit more complicated. A Type 2 \ktc{} has exactly four $k-2$-cliques embedded in it. If the edges $(u,v)$ and $(w,x)$ are missing, then there is an induced cycle involving $u,v,w$ and $x$ and every edge of this cycle gives a different $k-2$-clique of the four $k-2$-cliques embedded in the \ktc{}. Let $min(u,v,w,x)=u$ and let $min(w,x)=w$. Then, for $k-2$-clique $K$, we set $f(K)=$ number of \ktcs{} such that $u, w \in K$. 

As long as $f$ is bounded and is a ``well behaved function'' i.e. has low variance, we can efficiently estimate $F$ using \ts{} as a black box. Improving the running time of the black box only improves the running time of the overall algorithm. We observe that in \ts{}, most of the time is spent in constructing the Shadow, but only a small fraction of it is used to gather samples. Thus, if we can first sample and determine which areas of the Shadow the samples lie in, we can save time by developing only those parts of the Shadow instead of developing the whole Shadow. Additionally, when the number of samples are fixed (as is the case in the practical implementation of our algorithm), we can interleave the development of the parts of the Shadow with sampling for \hcs{} from those parts, thus obtaining our estimate of $F$ in an online fashion. This leads to considerable savings in space and time.

Outline: In \Sec{prelim} we set some basic notation. In \Sec{inverse-ts} we show our basic framework \modifiedts{} and an optimized version of it called \mainalg{}. Depending on which type of pattern we want to count, we propose and analyze different counters in \Sec{counters}. Finally, in \Sec{results} we provide a detailed experimental study of \mainalg{} and its comparison with the state-of-the-art.

\section{Preliminaries} \label{sec:prelim}

We set some notation. The input graph $G$ has $n$ vertices and $m$ edges.
We will assume that $m \geq n$. Let $\alpha$ be the degeneracy of the graph. Recall that the degeneracy is the maximum outdegree of any vertex when the edges of the graph are oriented according to the degeneracy ordering of the vertices in $G$. Let  $N_{v}(G)$ represent the neighborhood of $v$ and let $N_{v}^+(G)$ represent the outneighborhood of $v$ when the vertices are ordered by degeneracy.

We use ``u.a.r."
as a shorthand for ``uniform at random".

We will be using the following (rescaled) Chernoff bound.

\begin{theorem} \label{thm:chernoff} [Theorem 1 in~\cite{DuPa09}] Let $X_1, X_2, \ldots, X_k$
be a sequence of iid random variables with expectation $\mu$. Furthermore, let $X_i \in [0,B]$. Then, for $\eps< 1$, $ \Pr[|\sum_{i=1}^k X_i - \mu k| \geq \eps\mu k] \leq 2\exp(-\eps^2 \mu k/3B) $.
\end{theorem}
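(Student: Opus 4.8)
The plan is to obtain this rescaled inequality from the classical multiplicative Chernoff bound for $[0,1]$-valued summands, so that the only substantive work is the exponential-moment (Chernoff--Bernstein) estimate together with one elementary calculus lemma.

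\textbf{Reduction to $B=1$.} First I would normalize: set $Y_i := X_i/B$, so the $Y_i$ are iid, lie in $[0,1]$, and have $\EX[Y_i]=\mu/B$. Writing $p := \mu/B$ and $\nu := pk$, the event $\{|\sum_i X_i - \mu k|\ge \eps\mu k\}$ is literally $\{|\sum_i Y_i - \nu|\ge\eps\nu\}$, and the target bound $2\exp(-\eps^2\mu k/3B)$ equals $2\exp(-\eps^2\nu/3)$. (If $\mu=0$ then all $X_i$ vanish almost surely and the claim is trivial, so I assume $\mu>0$, hence $\nu>0$; I also take $\eps\in(0,1)$, since for $\eps\le 0$ the statement is vacuous.) Thus it suffices to prove: for iid $Y_i\in[0,1]$ with $\sum_i\EX[Y_i]=\nu$ and $\eps\in(0,1)$, one has $\Pr[|\sum_i Y_i-\nu|\ge\eps\nu]\le 2\exp(-\eps^2\nu/3)$.

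\textbf{The two tails.} For the upper tail I would apply Markov's inequality to $\exp(t\sum_i Y_i)$ with $t>0$: by independence, $\Pr[\sum_i Y_i\ge(1+\eps)\nu]\le e^{-t(1+\eps)\nu}\prod_i\EX[e^{tY_i}]$. Convexity of $x\mapsto e^{tx}$ on $[0,1]$ gives the linear upper bound $e^{tY_i}\le 1+Y_i(e^t-1)$, hence $\EX[e^{tY_i}]\le 1+p(e^t-1)\le \exp(p(e^t-1))$, so $\prod_i\EX[e^{tY_i}]\le \exp(\nu(e^t-1))$; optimizing at $t=\ln(1+\eps)$ yields $\Pr[\sum_i Y_i\ge(1+\eps)\nu]\le\big(e^{\eps}(1+\eps)^{-(1+\eps)}\big)^{\nu}$. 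The lower tail is symmetric: Markov applied to $\exp(-t\sum_i Y_i)$, $t>0$, gives $\Pr[\sum_i Y_i\le(1-\eps)\nu]\le\big(e^{-\eps}(1-\eps)^{-(1-\eps)}\big)^{\nu}$. I would then invoke the elementary bounds $e^{\eps}(1+\eps)^{-(1+\eps)}\le e^{-\eps^2/3}$ and $e^{-\eps}(1-\eps)^{-(1-\eps)}\le e^{-\eps^2/2}\le e^{-\eps^2/3}$ (valid for $\eps\in(0,1)$), conclude that each tail is at most $\exp(-\eps^2\nu/3)$, add them to get the factor $2$, and undo the rescaling $\nu=\mu k/B$.

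\textbf{Where the work is.} The only non-routine step is the calculus lemma $e^{\eps}(1+\eps)^{-(1+\eps)}\le e^{-\eps^2/3}$ on $(0,1)$, i.e.\ $g(\eps):= \eps-(1+\eps)\ln(1+\eps)+\eps^2/3\le 0$ there. I would prove it by differentiating: $g(0)=0$, and $g'(\eps)=\tfrac{2\eps}{3}-\ln(1+\eps)$ has $g'(0)=0$, is decreasing on $[0,\tfrac12]$ and increasing on $[\tfrac12,1]$ (since $g''(\eps)=\tfrac23-\tfrac1{1+\eps}$ vanishes only at $\eps=\tfrac12$), and $g'(1)=\tfrac23-\ln 2<0$; hence $g'\le 0$ on $[0,1]$, so $g$ is nonincreasing and $g\le g(0)=0$. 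The companion lower-tail inequality follows from the series identity $(1-\eps)\ln(1-\eps)=-\eps+\sum_{n\ge 2}\eps^n/(n(n-1))\ge-\eps+\eps^2/2$. This $\eps^2/3$ versus $\eps^2/2$ bookkeeping, and the sign analysis of $g''$ (which is negative near $0$ and positive near $1$, so $g$ is not globally concave), is the main obstacle; everything else is the textbook Chernoff argument and uses only independence of the $Y_i$, not that they are identically distributed.
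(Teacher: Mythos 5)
Your proof is correct, but note that the paper itself does not prove this statement at all: it is imported verbatim as Theorem 1 of the cited reference (Dubhashi--Panconesi), a rescaled multiplicative Chernoff bound, and is used as a black box in the proofs of Theorems 4.4 and 4.7. What you have written is essentially the standard proof that the reference itself gives: normalize by $B$ to reduce to $[0,1]$-valued variables, bound the moment generating function via convexity ($e^{ty}\le 1+y(e^t-1)$ on $[0,1]$), optimize $t$ to get the $\big(e^{\eps}(1+\eps)^{-(1+\eps)}\big)^{\nu}$ and $\big(e^{-\eps}(1-\eps)^{-(1-\eps)}\big)^{\nu}$ tail bounds, and then the calculus lemma $\eps-(1+\eps)\ln(1+\eps)\le-\eps^2/3$ on $(0,1)$ together with the series bound for the lower tail; your sign analysis of $g'$ and $g''$ is sound, and the reduction correctly converts $2\exp(-\eps^2\nu/3)$ back to $2\exp(-\eps^2\mu k/3B)$. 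Your argument also correctly uses only independence (not identical distribution) of the summands, which is if anything slightly more general than the stated theorem. One small imprecision: for $\eps<0$ the statement is not ``vacuous'' --- the event then has probability $1$ while the bound can be below $1$, so the inequality as literally written fails; the intended regime is $0<\eps<1$, which is exactly what you prove, so this is a matter of phrasing rather than a gap.
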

\section{Main algorithm} \label{sec:inverse-ts}

 At the core of \ts{} lies an object called the shadow. We define an analogous structure called \ps{}. 


\begin{definition}
Let $C_k(G)$ be the set of all $k-$cliques in $G$.  A $k$-clique \ps{} $\bS$ for graph $G$ 
is a set of triples $\{(P_i, S_i, \ell_i)\}$ where $P_i \subseteq V$, $S_i \subseteq V$ and $\ell_i \in \NN$ such that $ \forall (P_i, S_i, \ell_i) \in \bS, \forall c \in C_{\ell_i}(S_i), P_i \cup c$ is a unique \kc{} in $G$ and there is a bijection between $C_k(G)$ and $ \bigcup \limits_{(P_i, S_i, \ell_i) \in \bS}\bigcup \limits_{c \in C_{\ell_i}(S_i)} P_i \cup c$.

Moreover, if the multiset $\{(S_i,\ell_i)\}$ is such that $ \forall (S_i, \ell_i), \rho_2(S_i) > 1 - 1/(\ell_i - 1)$ where $\rho_2(S_i)$ represents the edge density of $S_i$, then $\bS$ is a \kc{} \pts{} of $G$.
\end{definition}

It is easy to see that $\{(v, N^+_v, h-1)\}$ is an $h$-clique \ps{} of $G$. 



We will briefly recap how \ts{} constructs the shadow. It orders the vertices of $G$ by degeneracy and converts it into a DAG. As shown in ~\cite{FFF15}, to count \kcs{} in $G$ it suffices to count the number of $k-1$-cliques in the outneighborhood of every vertex. Hence,  for every vertex $v \in V$, \ts{} counts the number of $k$-cliques with $v$ as the lowest order vertex by looking at the number of $k-1$-cliques in the outneighborhood of $v$, and it applies this procedure recursively. When the outneighborhood becomes dense enough, instead of continuing to expand the partial clique, it adds the outneighborhood to the shadow and continues until there are no more outneighborhoods left to be added to the shadow.

Algorithm \ptsfinder{} carries out exactly the same steps as \textit{Shadow-Finder} in ~\cite{JaSe17}, except that at each stage it also maintains the partial clique $P$. 

\begin{algorithm}
\caption{\ptsfinder$(G,k)$}
 Initialize $\bT = \{(\emptyset, V,k)\}$ and $\bS = \emptyset$ \\
 While $\exists (P,S,\ell) \in \bT$ such that $\rho_2(S) \leq 1 - \frac{1}{\ell-1}$ \\
 \ \ \ \ Construct the degeneracy DAG $D(G|_S)$ \\
 \ \ \ \ Let $N^+_s$ denote the outneighborhood (within $D(G|_S)$) of $s \in S$ \\
 \ \ \ \ Delete $(P,S,\ell)$ from $\bT$ \\
 \ \ \ \ For each $s \in S$ \\
 \ \ \ \ \ \ \ If $\ell \leq 2$ or $\rho_2(N^+_s) > 1 - \frac{1}{\ell-2}$ \\
 \ \ \ \ \ \ \ \ \ \label{step:add} Add $(P \cup \{s\}, N^+_s,\ell-1)$ to $\bS$ \\ 
 \ \ \ \ \ \ \ \label{step:move} Else, add $(P \cup \{s\}, N^+_s,\ell-1)$ to $\bT$ \\
 Output $\bS$
 \end{algorithm}

 \begin{claim} \label{clm:ptsf}
 Given a graph $G$ and integer $k$, \ptsfinder{} returns a $k$-clique \pts{} of $G$. Its running time is $O(|V|^{k+1})$.
 \end{claim}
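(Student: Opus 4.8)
The plan is to prove the two halves of the claim separately --- that the returned $\bS$ is a \kc{} \pts{} of $G$, and the $O(|V|^{k+1})$ running time --- the first by a loop invariant, the second by sizing the recursion tree. For correctness I would induct on iterations of the while loop with the invariant: \emph{at every point $\bT\cup\bS$ is a \kc{} \ps{} of $G$, and every triple $(P,W,\ell)$ in it satisfies $|P|+\ell=k$, $P$ is a clique of $G$, and every vertex of $W$ is adjacent to all of $P$.} Here ``$\bT\cup\bS$ is a \ps{}'' is read as: the assignment $(i,c)\mapsto P_i\cup c$, ranging over $(P_i,S_i,\ell_i)\in\bT\cup\bS$ and $c\in C_{\ell_i}(S_i)$, is a bijection onto $C_k(G)$. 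The base case is immediate, since $\bT\cup\bS=\{(\emptyset,V,k)\}$ and $c\mapsto c$ is a bijection $C_k(V)\to C_k(G)$. For the inductive step, when an iteration expands $(P,W,\ell)$ I would invoke the degeneracy clique-enumeration identity of~\cite{FFF15}: within the DAG $D(G|_W)$, the map $(s,c')\mapsto\{s\}\cup c'$ is a \emph{bijection} from $\{(s,c'):s\in W,\ c'\in C_{\ell-1}(N^+_s)\}$ onto $C_\ell(W)$, because every $\ell$-clique of $G|_W$ is recovered exactly once --- from its lowest-ordered vertex. Precomposing with ``$P\cup(-)$'' shows that replacing the single triple $(P,W,\ell)$ by the family $\{(P\cup\{s\},N^+_s,\ell-1):s\in W\}$ keeps $(i,c)\mapsto P_i\cup c$ a bijection onto $C_k(G)$, and the side conditions check out directly --- $|P\cup\{s\}|+(\ell-1)=k$; $P\cup\{s\}$ remains a clique since $s\in W$ is adjacent to all of $P$; and every $w\in N^+_s\subseteq W$ is adjacent both to all of $P$ and to $s$. (Erasing the $P$-components from this argument is precisely the correctness of \emph{Shadow-Finder} in~\cite{JaSe17}; tracking $P$ consistently is the sole new ingredient.)

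Next I would establish termination and identify what ends up in $\bS$. Each expansion strictly decreases $\ell$ from its starting value $k$, and a triple is expanded only while $\ell>2$ (otherwise it is placed in $\bS$), so the loop halts. The key bookkeeping observation is that every triple the algorithm \emph{adds} to $\bT$ has $\rho_2(N^+_s)\le 1-1/((\ell-1)-1)$, i.e.\ it still satisfies the while-guard; hence the loop can only terminate with $\bT=\emptyset$ --- the one exception being the degenerate case $\rho_2(V)>1-1/(k-1)$, in which the loop never runs and $\{(\emptyset,V,k)\}$ is already a valid \kc{} \pts{}. So at termination $\bS=\bT\cup\bS$ is a \kc{} \ps{} by the invariant. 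Finally, a triple is placed in $\bS$ only when $\ell\le 2$ (so $\ell_i=\ell-1\le 1$ and the Tur\'an-density requirement is vacuous by the convention used in~\cite{JaSe17}) or when $\rho_2(N^+_s)>1-1/(\ell-2)$, which for $\ell\ge 3$ is exactly $\rho_2(S_i)>1-1/(\ell_i-1)$ with $\ell_i=\ell-1\ge 2$. Thus every $(S_i,\ell_i)$ in $\bS$ meets the density condition and $\bS$ is a \kc{} \pts{}.

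For the running time I would count expanded triples and per-triple cost, treating $k$ as a constant. The expanded triples sit at ``levels'' $0,\ldots,k-2$ (their $\ell$ running down from $k$ to $2$) with branching at most $|V|$ at each level, so there are $O(|V|^{k-2})$ of them. Handling one triple $(P,W,\ell)$ costs $O(|W|^2)$ to build the degeneracy ordering/DAG of $G|_W$, and then, summed over $s\in W$, $O(|W|^3)$ to count the edges inside the outneighborhoods $N^+_s$ so as to test their densities (this is no more than triangle-counting inside $G|_W$); maintaining $P$ adds only $O(k)$ per triple. The dominant term is $O(|V|^3)$ per triple, so the total is $O(|V|^{k-2}\cdot|V|^3)=O(|V|^{k+1})$ (a deliberately crude bound; a tighter count gives $O(|V|^{k})$, but the stated bound is all that is needed).

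The step I expect to be the real obstacle is the inductive claim that each expansion preserves a genuine \emph{bijection} with $C_k(G)$, not merely a covering family. This is exactly where the degeneracy orientation does the work --- it is what makes every clique emerge from a unique lowest vertex at each level --- and it is also what caps the recursion depth and hence pins the exponent in the running time. Once that is in hand, termination, the density bookkeeping, and the runtime arithmetic are routine.
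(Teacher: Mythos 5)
Your proposal is correct and follows essentially the same route as the paper's proof: induction over iterations of the while loop, with the degeneracy lowest-ordered-vertex bijection of~\cite{FFF15} driving the inductive step, the same density bookkeeping for membership in $\bS$, and a recursion-tree-size times per-node-cost accounting that yields $O(|V|^{k+1})$. The only cosmetic difference is that the paper delegates the unprefixed shadow bijection to Theorem 5.2 of~\cite{JaSe17} and inducts only on the prefix $P$, whereas you re-derive the full bijection inside your loop invariant (and explicitly flag the dense-$V$ edge case), which is a slightly more self-contained presentation of the same argument.
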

\begin{proof}
 When the function returns, $T$ is empty, and any element $(P,S,\ell) \in \bS$ was added to $\bS$ only when $\rho_2(S) > 1-1/(\ell-1)$. Thus, if $\bS$ is a \ps{}, it is also a \pts{}.

By Theorem 5.2 in ~\cite{JaSe17}, multiset $\{(S, \ell)\}$ is a shadow and hence, there is a bijection between $C_k(G)$ and $\bigcup_{(P,S,\ell) \in \bS} C_\ell(S)$. Thus, it suffices to prove that $ \forall (P, S, \ell) \in \bT \cup \bS, \forall c \in C_{\ell}(S), P \cup c$ is a unique \kc{} in $G$. We will prove this using induction. At the start of the first iteration, $P$ is empty, $S=V$ and $\ell = k$, $\bS = \{(P,S,\ell)\}$ and $\bT $ is empty. Thus, for the base case, the hypothesis is trivially true.

Suppose the hypothesis is true at the start of some iteration and lets say element $E=(P',S',\ell')$ is deleted from $\bT$ at the start of this iteration. Each $E_s=(P' \cup \{s\}, N^+_s,\ell'-1)$ for $s \in S'$ is added to $\bS$ or to $\bT$. Let $\cK(E)=\{P'\cup c | c \in C_{\ell'}(S')\}$ denote the set of \kcs{} obtained from $E$. It suffices to prove that: (i) for any $k-$clique $K \in \cK(E), K \in \bigcup_s \cK(E_s)$, (ii) $|\cK| = \sum_s |\cK(E_s)|$. 

Consider a \kc{} $K = P' \cup c, c \in C_{\ell'}(S')$. Let $s$ be the lowest order vertex in $c$ according to the degeneracy ordering in $G|_{S'}$. Then, $c \setminus \{s\}$ is an $\ell-1$-clique in $N_{s}^+$. Thus, $K \in  \cK(E_{s})$. Additionally, for $c \in C_{\ell'}(S')$ the smallest vertex in $c$ defines a partition over $C_{\ell'}(S')$. Hence, $|C_{\ell'}(S')|=\sum_{s \in S'}{|C_{\ell'-1}(N_s^+)|}$ i.e. $|\cK(E)| = \sum_{s \in S'} {|\cK(E_{s})|}$. Hence, proved. 

The out-degree of every vertex is at most $|V|$ and the depth of the recursive calls is atmost $k-1$. When processing an element $(P, S, \ell)$ it constructs the graph $G|_S$ which takes time atmost $|V|^2$ since it queries every pair of vertices in $S$ and $|S|<|V|$. Thus, the time required is $O(|V|^{k+1})$. 
 \end{proof}

 \begin{algorithm} 
\caption{\sample$(\bf{S})$ \newline 
{\bf Inputs:} \textit{$\bf{S}$}: $k-$clique \pts{} of some graph $G$ \newline
{\bf Output:} $B$: $k-$vertex set
} \label{alg:sample}
Let $w(\bS)=\sum\limits_{(P',S',\ell') \in \bf{S}}{{|S'| \choose \ell'}}$ \\
Set probability distribution $D$ over $\bf{S}$ such that $(P,S,\ell) \in \bf{S}$ is sampled with probability ${|S| \choose \ell}/w(\bS)$ \\
Sample a $(P,S,\ell)$ from $D$ \\
Choose a u.a.r. $\ell-$tuple $c$ from $S$ \\
Let $B = P \cup \{c\}$ \\
return $B$
\end{algorithm}

\begin{claim} \label{clm:uar}
 The probability of any $k$-clique $K$ in $G$ being returned by a call to \sample{} is $\frac{1}{w(\bS)}$ . 
 \end{claim}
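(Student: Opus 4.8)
The plan is to fix an arbitrary $k$-clique $K \in C_k(G)$ and pin down exactly which sequences of random choices inside \sample{} output $K$. By the definition of a $k$-clique \pts{}, each set $P_i \cup c$ with $(P_i,S_i,\ell_i) \in \bS$ and $c \in C_{\ell_i}(S_i)$ is a distinct $k$-clique of $G$, and the collection of these sets is in bijection with $C_k(G)$. Hence there is a \emph{unique} triple $(P^\star,S^\star,\ell^\star) \in \bS$ and a unique $\ell^\star$-clique $c^\star \in C_{\ell^\star}(S^\star)$ with $K = P^\star \cup c^\star$; from the shadow construction $P^\star$ and $S^\star$ are disjoint, so $c^\star$ is genuinely an $\ell^\star$-subset of $S^\star$ and $\binom{|S^\star|}{\ell^\star} \ge 1$.

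The next step is to show that \sample{} returns $K$ if and only if it samples the triple $(P^\star,S^\star,\ell^\star)$ and then draws precisely the $\ell^\star$-subset $c^\star$ of $S^\star$. The ``if'' direction is immediate. For ``only if'', suppose \sample{} draws a triple $(P,S,\ell) \in \bS$ and an $\ell$-subset $c \subseteq S$ with $P \cup c = K$. Since $K$ is a clique and $c \subseteq K$, the set $c$ is itself an $\ell$-clique, so $c \in C_\ell(S)$; then injectivity of the bijection in the \pts{} definition forces $(P,S,\ell) = (P^\star,S^\star,\ell^\star)$ and $c = c^\star$. I expect this to be the only place any thought is needed: \sample{} draws an \emph{arbitrary} $\ell$-subset of $S$, not only cliques, so a priori $K$ might be assembled from a non-clique subset, and that case must be excluded --- which it is, by the trivial fact that subsets of cliques are cliques, combined with the uniqueness guaranteed by the \pts{} definition.

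Given that, the probability computation is a single line: the triple-sampling step picks $(P^\star,S^\star,\ell^\star)$ with probability $\binom{|S^\star|}{\ell^\star}/w(\bS)$, and conditioned on that, the $\ell$-tuple step picks $c^\star$ among the $\binom{|S^\star|}{\ell^\star}$ equally likely $\ell^\star$-subsets of $S^\star$ with probability $1/\binom{|S^\star|}{\ell^\star}$. As these are the only choices yielding $K$, we get $\Pr[\sample(\bS) = K] = \frac{\binom{|S^\star|}{\ell^\star}}{w(\bS)}\cdot\frac{1}{\binom{|S^\star|}{\ell^\star}} = \frac{1}{w(\bS)}$. So I would organize the write-up as: (i) locate the unique $(P^\star,S^\star,\ell^\star,c^\star)$ via the bijection, (ii) show these are the unique ``winning'' random choices, handling the non-clique-subset subtlety, (iii) multiply the two probabilities. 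No step beyond (ii) presents a real obstacle.
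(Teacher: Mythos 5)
Your proposal is correct and takes essentially the same route as the paper: sample the unique triple with probability $\binom{|S^\star|}{\ell^\star}/w(\bS)$, then the unique $\ell^\star$-subset with probability $1/\binom{|S^\star|}{\ell^\star}$, and multiply. The only difference is that you explicitly verify the ``only if'' direction (uniqueness of the winning random choices, via the bijection in the \pts{} definition and the fact that subsets of cliques are cliques), which the paper's proof leaves implicit.
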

\begin{proof}
Let $E = (P,S,\ell) \in \bS$ where $\bS$ is the $k$-clique \ps{} of some graph $G$. Note that $w(\bS)=\sum\limits_{(P',S',\ell') \in \bf{S}}{{|S'| \choose \ell'}}$.  Let $c$ be an $\ell-$clique in $S$ and let $K = P \cup c$ then $K$ must be a unique $k-$clique in $G$. 

$Pr(\text{K is sampled}) =$  $Pr(\text{E is sampled from D})* Pr(\text{c is sampled from S}) = \frac{{|S| \choose \ell}}{w(\bS)} * \frac{1}{{|S| \choose \ell}} = \frac{1}{w(\bS)}$.
Thus, every $k-$clique in $G$ has the same probability of being returned by \sample{}.  
 \end{proof}

 We will first describe \modifiedts{}. Essentially, it constructs the \pts{} of $G$, samples \hcs{}, obtains $f$ for the sampled \hc{} and estimates the value of $F$.

\begin{algorithm} 
\caption{\modifiedts$(G, h, s, Func)$ \newline 
{\bf Inputs:} \textit{$G$}: input graph, \textit{$h$}: clique size // $= k$ for cliques, $k-1$ for \koc{} and Type $1$ \ktc{}, $k-2$ for Type $2$ \ktc{} \newline
\textit{$s$}: budget for samples, \textit{$Func$}: Function that returns $f(K)$ for $h$-clique $K$. \newline
{\bf Output:} $\hat{F}$: estimated $F$
} \label{alg:modifiedts}
$\bf{S}= \ptsfinder(G, h)$\\ 
Let $w(\bS) = \sum\limits_{(P,S,\ell) \in \bf{S}}{|S| \choose \ell}$ \\
For $i = 1,2,...,s$: \\
\ \ \ \ $K=\sample(\bf{S})$ \\
\ \ \ \ If $K$ is a clique, set $X_i = Func(G, K)$ \\ 
\ \ \ \ else set $X_i = 0$ \\
\ \ \ \ $W = W + X_i$ \\
let $\hat{F} = \frac{W}{s}w(\bS)$ \\
return $\hat{F} $
\end{algorithm}

\begin{theorem} \label{thm:basic}
Let $f$ be a function over $h$-cliques, bounded above by $B$ such that given an $h$-clique, it takes $O(T_f)$ time to obtain the value of $f$. Let $\hat{F}$ be the output of \modifiedts{}, then $\Exp[\hat{F}] = F$. Moreover, given any $\eps>0, \delta>0$ and number of samples $s=3 w(\bS) B\ln{(2/\delta)}/\eps^2F$, then with probability at least $1-\delta$ (this probability is over the randomness of \modifiedts{}; there is no stochastic assumption on $G$), $|\hat{F} - F| \leq \eps F$.

Let $\bS$ denote the \hc{} Tur\'an shadow of $G$ and $\sz{\bS}=\sum_{(S,\ell) \in \bS} |S|$. The running time of \modifiedts{} is $O(\alpha\sz{\bS} + sT_f + m + n)$ and
 the total storage is $O(\sz{\bS} + m + n)$.
 \end{theorem}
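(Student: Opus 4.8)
The plan is to split the statement into three independent pieces: unbiasedness of $\hat F$, the sample-complexity (concentration) bound, and the time/space accounting.

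For unbiasedness, I would observe that $\modifiedts$ makes $s$ independent calls to $\sample(\bS)$ on a fixed $h$-clique \pts{} $\bS$, so $X_1,\dots,X_s$ are i.i.d., with each $X_i$ equal to $f(K)$ when the returned $h$-set $K$ is an $h$-clique and $0$ otherwise. By \Clm{uar} (with $h$ in the role of $k$), every $h$-clique of $G$ is returned by $\sample$ with probability exactly $1/w(\bS)$, while non-clique $h$-sets contribute $0$; hence $\Exp[X_i] = \sum_{K \in C_h(G)} f(K)/w(\bS) = F/w(\bS)$. Since $\hat F = (W/s)\,w(\bS)$ with $W=\sum_i X_i$, linearity of expectation gives $\Exp[\hat F] = w(\bS)\,\Exp[X_1] = F$.

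For the concentration bound I would apply \Thm{chernoff} to $X_1,\dots,X_s$: they are i.i.d., lie in $[0,B]$ since $0 \le f \le B$ (recall $f:C_h\to\R^+$), and have common mean $\mu = F/w(\bS)$. The bound yields $\Pr[\,|\sum_i X_i - \mu s| \ge \eps\mu s\,] \le 2\exp(-\eps^2\mu s/3B)$ for $\eps<1$. Scaling the event inside the probability by $w(\bS)/s$ and using $\mu\,w(\bS)=F$ rewrites $|\sum_i X_i - \mu s|\ge\eps\mu s$ as $|\hat F - F|\ge\eps F$, so $\Pr[|\hat F-F|\ge\eps F]\le 2\exp(-\eps^2\mu s/3B)$. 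Substituting $\mu = F/w(\bS)$ and $s = 3w(\bS)B\ln(2/\delta)/\eps^2F$ (which requires $F>0$; otherwise $\hat F=F=0$ trivially) drives the exponent to at most $-\ln(2/\delta)$, giving failure probability at most $\delta$ --- and this probability is only over the internal coins of $\modifiedts$.

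For the resources I would account for the two phases separately. The construction phase runs $\ptsfinder$, which is identical to \textit{Shadow-Finder} of \cite{JaSe17} apart from carrying the partial clique $P$ (an extra $O(k)$ per recursion node, asymptotically dominated), so by the refined analysis there the build time is $O(\alpha\,\sz{\bS} + m + n)$ --- intuitively, the degeneracy ordering of $G$ costs $O(m+n)$, every vertex set $S$ strictly below the root sits inside one out-neighborhood and thus has $|S|\le\alpha$, so building $D(G|_S)$ costs $O(\alpha|S|)$, and these charges sum against the shadow --- and the storage is $O(\sz{\bS} + m + n)$ for $G$ together with the triples $(P,S,\ell)$, each of size $O(|S|+k)$. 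Computing $w(\bS)$ and the distribution $D$ is one linear pass (e.g. via the alias method, permitting $O(1)$ draws). Each of the $s$ sampling rounds draws from $D$, picks an $\ell$-tuple in $O(k)$, tests cliqueness in $O(k^2)$, and, when it is a clique, evaluates $Func$ in $O(T_f)$; over $s$ rounds this is $O(s\,T_f)$ after absorbing the $\poly(k)$ terms. Summing gives time $O(\alpha\,\sz{\bS} + sT_f + m + n)$ and space $O(\sz{\bS} + m + n)$.

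The probabilistic parts are short once \Clm{uar} and \Thm{chernoff} are in hand; the only care needed is checking that $X_i$ is really supported on $[0,B]$ (so non-negativity of $f$ is used) and the algebraic rescaling of the deviation event. The part that takes genuine work is the time bound: upgrading the crude $O(|V|^{k+1})$ estimate of \Clm{ptsf} to $O(\alpha\,\sz{\bS}+m+n)$ rests on the structural fact that all vertex sets below the root have size at most $\alpha$, so per-node work is charged against the shadow size rather than against $|V|$, plus verifying that maintaining $P$ adds nothing asymptotically. I would either cite the refined analysis of \cite{JaSe17} as a black box or reprove this telescoping directly.
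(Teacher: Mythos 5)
Your proposal is correct and follows essentially the same route as the paper: unbiasedness via \Clm{uar}, concentration via the rescaled Chernoff bound of \Thm{chernoff}, and the time/space bounds by citing the Tur\'anShadow analysis (Theorem 5.4 of the Jain--Seshadhri paper) plus the additive $O(sT_f)$ term for evaluating $Func$. Your version is somewhat more careful than the paper's (you verify $X_i\in[0,B]$ from $0\le f\le B$ rather than assuming it, and you handle the $F=0$ edge case), but the decomposition and key ingredients are identical.
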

\begin{proof}
 The $X_i$ are all iid random variables and by the arguments in \Clm{uar}, every \hc{} in $G$ has the same probability of being returned by \sample{}. $\Exp[X_i]=\sum\limits_{K \in C_k(G)}\frac{f(K)}{w(\bS)}=\frac{F}{w(\bS)}$. Suppose $X_i \in [0,B]$. By \Thm{chernoff}, $Pr[|\sum_{i=1}^sX_i-s\Exp[X_i]| \geq \eps s \Exp[X_i] \leq \delta$ when $s=3 w(\bS) B\ln{(2/\delta)}/\eps^2F$.


 The running time and storage required are a direct consequence of the running time and storage required for \ts{} (Theorem 5.4 in ~\cite{JaSe17}. The only difference is the addition of $sT_f$ in the running time which is the time required to obtain $f$ for $s$ samples. 

 \end{proof}



\subsection{\mainalg{}}


 We observed that with \ts{}, bulk of the time is spent in building the tree, and only a small fraction is needed for sampling. To give a few examples, for the web-Stanford graph, construction of the shadow took 155 seconds for approximating number of 7 cliques, while taking $50K$ samples required 0.2 seconds. Similar results were observed for all other graphs we experimented with. Thus, naturally, to optimize the performance of \ts{} it would be beneficial to minimize the fraction of the shadow that is required to be built.
Consider one extreme of minimizing building the shadow - we will call it level 1 sampling. Let $N_v^+$ be the outneighborhood of $v$ in $DG$, $\Phi_v = {|N_v^+| \choose h-1}$ and $\Phi = \sum\limits_v{\Phi_v}$.  $\{(v, N^+_v, h-1)\}$ is an $h$-clique \ps{} of $G$. If we sample a $v$ with probability proportional to $\Phi_v$, and sample $h-1$-tuple of vertices from $N^+_v$ u.a.r., the probability of sampling a particular $h-1$-clique in $N^+_v$ would be $\Phi_v/\Phi * 1/\Phi_v = 1/\Phi$. If there are $C_h$ $h$-cliques in $G$ then the probability that a sampled set of $h-$vertices is a clique is $C_h/\Phi$ (we call this the success ratio). Hence, number of samples required to find a $h$-clique would be $O(\Phi/C_h)$. But $\Phi$ is typically very large compared to $C_h$ and hence the number of samples required would be very large. In other words, most of the $h-$vertex sets picked will not be cliques. 

\ts{} remedies this by first finding the Tur\'an shadow and then sampling within the subgraphs of the shadow which are dense and hence require lesser samples to find a $k$-clique. Thus, \ts{} saves on the number of samples required at the cost of building the shadow. 

The advantage of level 1 sampling is that we do not need to spend time finding the Tur\'an Shadow. We mimic the process of sampling an $h$-clique from this \ps{}, but boost the success ratio by using the latter approach. In particular, we sample a $v$ proportional to $\Phi_v={|N_v^+| \choose h-1}$, and obtain the $h-1$-clique \pts{} $\bS$ of $N_v^+$. Suppose the shadow size $\phi_v=\sum_{(P,S,\ell) \in \bS}{{S \choose \ell}}$ then probability of sampling a $h-1$-clique $=C_{h-1}(G|_{N_v^+})/\phi_v$. Thus, the success ratio goes from $C_{h-1}(G|_{N_v^+})/\Phi_v$ to $C_{h-1}(G|_{N_v^+})/\phi_v$. Since $\phi_v$ is typically much smaller than $\Phi_v$, the success ratio is much improved. However, to account for the fact that we are now sampling u.a.r. in a search space of size $\phi_v$ and not $\Phi_v$, we give a smaller weight $(\phi_v/\Phi_v)$ to every clique obtained from $N_v^+$.

 \begin{algorithm} 
\caption{\mainalg$(G, h, s, Func)$ }
\label{alg:mainalg}
Order $G$ by degeneracy and convert it to a DAG $DG$. \\
Let $M$ be a map, $W=0$ \\
Set probability distribution $D$ over $V$ where $p(v) = \sum\limits_v{\Phi_v / \Phi}$. \\
For $i = 1,2,...,s$: \\
\ \ \ \ Independently sample a vertex $v$ from $D$. \\ \label{step:sample-v}
\ \ \ \ If $M[v]$ exists, set $\bS=M[v]$ \\
\ \ \ \ else \\
\ \ \ \ \ \ \ \ $\bf{S}$ $= \ptsfinder(G_{|N_v^+}, h-1)$\\ \label{step:find-pts}
\ \ \ \ \ \ \ \ $M[v]=\bf{S}$ \\
\ \ \ \ Let $\phi_v = \sum\limits_{(P,S,\ell) \in \bf{S}}{|S| \choose \ell}$ \\
\ \ \ \ Let $K = \{v\} \cup \sample(\bf{S})$ \\ \label{step:sample-k}
\ \ \ \ If $K$ is a clique, set $X_i = \frac{\phi_v}{\Phi_v} *Func(G, K)$ \\ \label{step:func}
\ \ \ \ else set $X_i = 0$ \\
\ \ \ \ $W = W + X_i$ \\
let $\hat{F} = \frac{W}{s}\Phi$ \\
return $\hat{F} $
\end{algorithm}

For an element $E=(P,S,\ell) \in \bS $ where $\bS$ is the $k-$clique \pts{} of a graph $G$, let $\cK(E)=\{P \cup c, c \in C_{\ell}(S)\}$ denote the set of \kcs{} obtained from $E$.

\begin{lemma} \label{lem:exp}
Let $\hat{F}$ be the value returned by \mainalg{}. Then $\Exp[\hat{F}] = F$.
 \end{lemma}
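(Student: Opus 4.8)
The plan is to compute $\Exp[\hat F]$ by linearity of expectation, conditioning each iteration on the vertex it samples. Since $\hat F = \frac{\Phi}{s}\sum_{i=1}^{s} X_i$, it suffices to show $\Exp[X_i] = F/\Phi$ for every $i$; the dependence across iterations introduced by the cache $M$ is irrelevant here, because within iteration $i$ the triple-set $\bS$ that is used is the \emph{deterministic} output $\ptsfinder(G|_{N_v^+}, h-1)$ of the sampled vertex $v$, so conditioning on $v$ pins down everything except the fresh internal randomness of the subsequent call to \sample{}.

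First I would fix $i$ and condition on the event ``$v$ is sampled in \Step{sample-v}'', which has probability $\Phi_v/\Phi$. Given $v$, the set $\bS$ is an $(h-1)$-clique \pts{} of $G|_{N_v^+}$ by \Clm{ptsf}, and its weight $w(\bS)$ equals the quantity $\phi_v$ computed by the algorithm. Observe that $K=\{v\}\cup\sample(\bS)$ is an $h$-clique of $G$ if and only if $\sample(\bS)$ returns some $c\in C_{h-1}(G|_{N_v^+})$: every vertex of $N_v^+$ is adjacent to $v$, so $\{v\}\cup c$ is an $h$-clique of $G$ exactly when $c$ is an $(h-1)$-clique of the induced subgraph $G|_{N_v^+}$. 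By \Clm{uar}, each such $c$ is the output of $\sample(\bS)$ with probability $1/w(\bS)=1/\phi_v$. Hence
\[
\Exp[X_i \mid v] \;=\; \sum_{c\in C_{h-1}(G|_{N_v^+})} \frac{1}{\phi_v}\cdot\frac{\phi_v}{\Phi_v}\,f(\{v\}\cup c) \;=\; \frac{1}{\Phi_v}\sum_{c\in C_{h-1}(G|_{N_v^+})} f(\{v\}\cup c).
\]

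Removing the conditioning gives
\[
\Exp[X_i] \;=\; \sum_{v\in V}\frac{\Phi_v}{\Phi}\cdot\frac{1}{\Phi_v}\sum_{c\in C_{h-1}(G|_{N_v^+})} f(\{v\}\cup c) \;=\; \frac{1}{\Phi}\sum_{v\in V}\;\sum_{c\in C_{h-1}(G|_{N_v^+})} f(\{v\}\cup c),
\]
and the last equality I would justify by the standard degeneracy-orientation identity of~\cite{FFF15}: the map $(v,c)\mapsto\{v\}\cup c$ is a bijection between the set of pairs $\{(v,c): v\in V,\ c\in C_{h-1}(G|_{N_v^+})\}$ and $C_h(G)$, with inverse $K\mapsto(u,\,K\setminus\{u\})$ where $u$ is the lowest vertex of $K$ in the degeneracy order (one checks $K\setminus\{u\}\subseteq N_u^+$ and is a clique of $G|_{N_u^+}$, and conversely that $v$ is the lowest vertex of $\{v\}\cup c$). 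Substituting, the double sum equals $\sum_{K\in C_h(G)} f(K)=F$, so $\Exp[X_i]=F/\Phi$ and therefore $\Exp[\hat F]=\frac{\Phi}{s}\sum_{i=1}^{s}\Exp[X_i]=F$.

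The only real content is the two ``cancellation'' facts: that the reweighting factor $\phi_v/\Phi_v$ applied in \Step{func} exactly compensates for sampling uniformly over a shadow of weight $\phi_v$ instead of over the nominal space of $\binom{|N_v^+|}{h-1}=\Phi_v$ candidate sets (this is where \Clm{uar} is used), and the lossless decomposition of $h$-cliques by their lowest degeneracy-order vertex; the rest is routine manipulation of conditional expectations, and no independence of the $X_i$ is needed for this lemma.
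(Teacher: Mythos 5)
Your proof is correct and follows essentially the same route as the paper's: both compute the probability that a given $h$-clique is produced in one iteration as $\frac{\Phi_v}{\Phi}\cdot\frac{1}{\phi_v}$, observe that the reweighting factor $\phi_v/\Phi_v$ in \Step{func} cancels this down to a uniform $1/\Phi$ per clique, and then sum over all cliques via the lowest-degeneracy-vertex decomposition. Your added remarks (that the cache $M$ is harmless because \ptsfinder{} is deterministic given $v$, and that no independence of the $X_i$ is needed for the expectation) are correct refinements of details the paper leaves implicit.
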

 \begin{proof}
Consider an \hc{} $K \in C_h(G)$ and let $v$ be the lowest order vertex according to degenerecy ordering of vertices in $G$. Let $E = (v, N^+_v,h-1)$ then, $K \in \cK(E)$. 

Let $\bS_v$ be the $h-1$-clique \pts{} of $G|_{N^+_v}$ and let $E_v=(P,S,\ell) $ be the element in $\bS_v$ such that $K = {v} \cup P \cup c, c \in C_{\ell}(S)$.

$Pr(K \text{ is sampled in \Step{sample-k}})=$ $Pr(E \text{ is sampled}) * Pr(E_v \text{ is sampled}) * Pr(c \text{ is sampled})$ $= \frac{\Phi_v}{\Phi} * \frac{{|S| \choose \ell}}{\phi_v} * \frac{1}{{|S| \choose \ell}} = \frac{\Phi_v}{\Phi \phi_v}$
Thus, $\Exp[X_i] = \sum\limits_{v \in V}{\sum\limits_{K \in \cK(E_v)}{\frac{\Phi_v}{\Phi \phi_v}\frac{\phi_v}{\Phi_v}f(K)}}=\sum\limits_{K \in C_k(G)}{\frac{f(K)}{\Phi}}=\frac{F}{\Phi}$

Moreover, $W = \sum\limits_{i = 1}^{s}{X_i}$. Therefore, $\Exp[W]=\Exp[\sum\limits_{i = 1}^{s}{X_i}]=\sum\limits_{i = 1}^{s}{\Exp[X_i]}=s\frac{F}{\Phi}$.

Hence, $\Exp[\hat{F}]=\Exp[\frac{W}{s}{\Phi}]=F$.
  \end{proof}
\begin{theorem}\label{thm:main}
Let $f$ be a function over $h$-cliques, bounded above by $B$ such that given an $h$-clique, it takes $O(T_f)$ time to obtain the value of $f$. Given any $\eps>0, \delta>0$ and number of samples $s=3 \Phi B\ln{(2/\delta)}/\eps^2F$, \mainalg{} outputs an estimate $\hat{F}$ such that with probability at least $1-\delta$, $|\hat{F} - F| \leq \eps F$.

 Let $\bS$ denote the \kc{} Tur\'an shadow of $G$ and $\sz{\bS}=\sum_{(S,\ell) \in \bS} |S|$. The running time of \mainalg{} is $O(min(s\alpha^{h},\alpha\sz{\bS}) + sT_f + m + n)$ and
 the total storage is $O(\sz{\bS} + m + n)$.
 \end{theorem}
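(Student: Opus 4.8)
The plan is to decompose the claim into three independent pieces: the concentration/accuracy guarantee, the running time bound, and the storage bound. The accuracy part follows almost immediately from \Lem{exp} combined with \Thm{chernoff}. Specifically, I would first observe that the $X_i$ produced in \mainalg{} are iid (each iteration samples $v$ from the fixed distribution $D$ and then samples within $\bS$ independently), with $\Exp[X_i] = F/\Phi$ by \Lem{exp}. I then need an upper bound on the range of $X_i$: since $X_i = (\phi_v/\Phi_v)\cdot Func(G,K) \le (\phi_v/\Phi_v) B \le B$ (using $\phi_v \le \Phi_v$, which holds because the \pts{} of $G|_{N_v^+}$ has total sampling weight no larger than $\binom{|N_v^+|}{h-1} = \Phi_v$ — this is exactly the statement that the shadow only shrinks the search space, and should be justified by the bijection in \Clm{ptsf} applied to $G|_{N_v^+}$). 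With $X_i \in [0,B]$ and $\mu = F/\Phi$, applying \Thm{chernoff} with $k = s = 3\Phi B \ln(2/\delta)/\eps^2 F$ gives $\eps^2 \mu k / 3B = \ln(2/\delta)$, so the failure probability is at most $2\exp(-\ln(2/\delta)) = \delta$; on the complementary event $|\sum X_i - s\mu| \le \eps s \mu$, which after multiplying by $\Phi/s$ is exactly $|\hat F - F| \le \eps F$.

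For the running time, I would bound the cost of the two regimes whose minimum appears in the statement. In the worst case, over the $s$ iterations we may invoke \ptsfinder{} on the outneighborhood subgraphs $G|_{N_v^+}$ for up to $s$ distinct vertices $v$; each such call costs $O(|N_v^+|^{h})$ by \Clm{ptsf} applied with graph $G|_{N_v^+}$ (whose vertex set has size $|N_v^+| \le \alpha$) and clique parameter $h-1$, giving $O(\alpha^{h})$ per call, hence $O(s\alpha^{h})$ total. Alternatively — and this is the key observation for the other branch of the min — because $M$ memoizes the \pts{} of each $N_v^+$, we never build the same outneighborhood shadow twice, and the union over all $v$ of the work done is dominated by the cost \ts{} pays to build the full \kc{} shadow of $G$, which is $O(\alpha \sz{\bS})$ by the analysis underlying \Thm{basic} (Theorem 5.4 of \cite{JaSe17}); the degeneracy DAG construction and the vertex-sampling setup add $O(m+n)$. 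Taking the better of the two bounds gives $O(\min(s\alpha^h, \alpha\sz{\bS}))$, and the $sT_f$ term accounts for the $s$ evaluations of $Func$ in \Step{func}.

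For storage: the only nontrivial object retained is the map $M$, whose total size is at most the union of all the per-vertex shadows $\bS_v$, which is again subsumed by $O(\sz{\bS})$; the DAG and the sampling distribution $D$ take $O(m+n)$, giving $O(\sz{\bS} + m + n)$.

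The main obstacle I anticipate is making the two running-time bounds precise simultaneously — in particular, arguing cleanly that the memoized work across all sampled $v$ never exceeds the cost of building the full Tur\'an shadow of $G$. This requires relating the per-vertex \pts{} of $G|_{N_v^+}$ to the global \kc{} shadow of $G$: intuitively the $N_v^+$-shadows are exactly the subtrees hanging off the root vertices in the recursion \ts{} performs, so their combined construction cost is at most what \ts{} spends overall, but spelling out this correspondence (and confirming $\sum_v \sz{\bS_v}$ is bounded by $\sz{\bS}$ up to constants) is the delicate step. The rest is bookkeeping on top of \Lem{exp}, \Clm{ptsf}, and \Thm{chernoff}.
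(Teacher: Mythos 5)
Your proposal is correct and follows essentially the same route as the paper's proof: iid $X_i$ with $\Exp[X_i]=F/\Phi$ from \Lem{exp} plugged into \Thm{chernoff}, and the running time obtained as the minimum of the per-sample bound $O(s\alpha^h)$ (via \Clm{ptsf} on $G|_{N_v^+}$ with $|N_v^+|\le\alpha$) and the memoized worst case of building the full shadow, $O(\alpha\,\sz{\bS})$, plus $sT_f+m+n$. If anything, you are slightly more careful than the paper in justifying $X_i\le B$ via $\phi_v\le\Phi_v$, a step the paper simply assumes.
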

 \begin{proof}
 The $X_i$ are all iid random variables and by the arguments in \Lem{exp}, their expectation $\mu=F/\Phi$. Suppose $X_i \in [0,B]$. By \Thm{chernoff}, $Pr[|\sum_{i=1}^sX_i-\mu s| \geq \eps s \mu] \leq \delta$ when $s=3 \Phi B\ln{(2/\delta)}/\eps^2F$.


 The degeneracy of $G$ can be computed in time linear in the size of the graph~\cite{MB83}.
 For any $v$, the map $M[v]$ in \mainalg{} stores the $h-1$-clique \pts{} of $N_v^+$. For any $v$ that gets sampled in \Step{sample-v}, \mainalg{} checks if the \pts{} of $N_v^+$ has been constructed and if so, it uses the already-constructed shadow. If not, it constructs it in \Step{find-pts} and stores it in $M$. Thus, in the worst case, it calculates the \pts{} of $N_v^+$ for every $v$ i.e. it calculates the \pts{} of $G$ which requires time $O(\alpha\sz{\bS})$ according to Thm. 5.4 from ~\cite{JaSe17}. On the other hand, given any $v$, the size of $N_v^+$ is atmost $\alpha$ so constructing the $h-1$-clique \pts{} takes time at most $O(\alpha^{h})$ (\Clm{ptsf}) and it samples $s$ such vertices from $D$ so time required is $O(s\alpha^h)$.

 There are $s$ $h$-vertex sets sampled in \Step{sample-k} and checking if the sampled vertices form a clique takes time $h^2$, while calculating $f$ given that the sampled set is a clique, takes time $T_f$.

 Thus, the total time required by \mainalg{} is $O(min(\alpha\sz{\bS}, s\alpha^k)+ sT_f + m + n)$. 

 \end{proof}

 Depending on which structure we are counting, we can find appropriate values for $B$ and $T_f$. Notice that in the worst case, depending on the structure of the graph, \mainalg{} may end up building the entire shadow in which case it will not provide any savings over \modifiedts{}. However, practically, we observe that we get significant savings in the amount of shadow built using \mainalg{} in most cases. Unless specified otherwise, all results in this paper are obtained using \mainalg{}.




\section{Counting cliques and near-k-cliques} \label{sec:counters}





\subsection{Counting \kocs{}}

\begin{algorithm}
\caption{\funckoc$(G,K)$
}
\label{alg:funckoc}
$f'=0$ \\
Let $u$ and $v$ be two distinct vertices from $K$ \\
Let $nbrs = N_u \cup N_v$ \\
For $nbr \in nbrs$: \\
\ \ \ \ If $nbr$ is connected to all vertices in $K$ except 1 vertex, say $w$ and $nbr>w$, then $f' = f'+1$\\ 
return $f'$
\end{algorithm}

\begin{definition}
Let $(u,v), u<v$, be the missing edge in a \koc{} $J$. The lower-order $k-1$-clique in $J$ is the $k-1$-clique $J \setminus \{u\}$, and $J \setminus \{v\}$ is the higher-order $k-1$-clique in $J$. 
\end{definition}

\begin{claim} \label{clm:kocs}
Let $f(K)$ for $k-1$-clique $K$ denote the number of \kocs{} that $K$ is the lower-order $k-1$-clique in. Then $F= \sum\limits_{K \in C_{k-1}(G)}{f(K)}=$ total number of \kocs{} in $G$.
\end{claim}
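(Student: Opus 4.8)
The plan is to establish a bijection between the set of \kocs{} in $G$ and the set of pairs $(K, J)$ where $J$ is a \koc{} and $K$ is the lower-order $k-1$-clique contained in $J$. Since $f(K)$ counts exactly the \kocs{} in which $K$ plays the role of the lower-order $k-1$-clique, the sum $\sum_{K \in C_{k-1}(G)} f(K)$ counts each such pair exactly once, and the claim reduces to showing that each \koc{} contributes to exactly one term of the sum, i.e. that every \koc{} $J$ has a unique lower-order $k-1$-clique and that this clique is genuinely a $k-1$-clique of $G$.

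First I would fix a \koc{} $J$ with missing edge $(u,v)$, $u < v$ (this ordering is well-defined since the missing edge is unique by definition of a \koc{}). I would observe that $J \setminus \{u\}$ is a set of $k-1$ vertices, all pairs of which are adjacent: the only non-adjacent pair in $J$ was $(u,v)$, and removing $u$ destroys that pair, so $J \setminus \{u\}$ induces a complete graph, hence $J \setminus \{u\} \in C_{k-1}(G)$. By the definition just given, this is the lower-order $k-1$-clique of $J$, and it is unique because $(u,v)$ is the unique missing edge. Conversely, if $K \in C_{k-1}(G)$ and $J$ is a \koc{} with $K$ as its lower-order $k-1$-clique, then $J = K \cup \{u\}$ where $u$ is the lower endpoint of $J$'s missing edge, so $J$ is determined by the pair $(K, \text{the extra vertex})$ together with the knowledge of which edge is missing — in particular the map $J \mapsto (K_J, J)$ from \kocs{} to such pairs is well-defined and injective with image exactly the set of pairs counted by $\sum_K f(K)$.

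Putting these together: each \koc{} $J$ is counted once in the term $f(K_J)$ where $K_J = J \setminus \{u\}$ is its lower-order $k-1$-clique, and is counted in no other term $f(K)$ for $K \neq K_J$ since a \koc{} has a unique lower-order $k-1$-clique. Therefore $\sum_{K \in C_{k-1}(G)} f(K)$ equals the total number of \kocs{} in $G$, which is $F$.

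I do not anticipate a serious obstacle here; the statement is essentially a bookkeeping identity. The one point that needs care is confirming that the tie-breaking rule (always deleting the \emph{lower-order} endpoint of the missing edge) yields a genuine partition — that is, that no \koc{} gets assigned to two different $k-1$-cliques and none gets assigned to zero. This follows immediately from the uniqueness of the missing edge in a \koc{}, which is guaranteed by the definition, so the verification is short. I would also note in passing that $f(K)$ is bounded (each $k-1$-clique $K$ can be extended to a \koc{} only by adding one of its common neighbors that misses exactly one vertex of $K$, of which there are at most $|N_u \cup N_v|$ for any two vertices $u,v \in K$, matching what \funckoc{} computes), which is what makes $f$ usable as the function $Func$ in \modifiedts{} and \mainalg{}.
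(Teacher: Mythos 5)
Your proposal is correct and follows essentially the same route as the paper's proof: each \koc{} is assigned to exactly one $k-1$-clique (its unique lower-order one), so the sum $\sum_K f(K)$ counts each \koc{} exactly once. You simply make explicit the two details the paper leaves implicit — that $J \setminus \{u\}$ is genuinely a clique and that the assignment is well-defined via the uniqueness of the missing edge.
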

\begin{proof}
Every \koc{} has exactly $1$ lower-order $k-1$-clique. If $f(K)$ denotes the number of \kocs{} that $K$ is a part of and is the lower-order clique in, then $\sum\limits_{K \in C_{k-1}(G)}{f(K)}=F=$ total number of \kocs{} in $G$.
\end{proof}
\begin{claim}
For input $k-1$-clique $K$, \funckoc{} returns $f(K)$.
\end{claim}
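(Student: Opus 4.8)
The plan is to exhibit a bijection between the vertices $nbr$ for which \funckoc{} increments $f'$ and the \kocs{} $J$ counted by $f(K)$ in \Clm{kocs}, and then read off that the returned value $f'$ equals $f(K)$. First I would pin down the shape of such a $J$: since $K$ is a clique on $k-1$ vertices and no edge of $K$ is missing, any \koc{} $J \supseteq K$ has the form $J = K \cup \{a\}$ for a unique vertex $a \notin K$, and because $J$ is missing exactly one edge while $K$ is missing none, that edge is incident to $a$, say $\{a,w\}$ with $w \in K$; equivalently $a$ is adjacent to every vertex of $K$ except $w$. The tie-break of \Clm{kocs} assigns $J$ to $K$ (rather than to $J \setminus \{w\}$) precisely when $a$ is the larger endpoint of the missing edge, i.e. $a > w$, which is exactly the condition tested in the loop. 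Conversely, any vertex $a$ adjacent to all of $K$ except exactly one vertex $w$ with $a > w$ yields, via $J := K \cup \{a\}$, exactly one such \koc{}, and distinct $a$ give distinct $J$. Hence, with $A := \{a : a \text{ has exactly one non-neighbour } w \text{ in } K \text{ and } a > w\}$, we have $f(K) = |A|$.

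Then I would verify that the loop increments $f'$ exactly once for each element of $A$ and never otherwise. Soundness is immediate from the test: any $nbr$ causing an increment has exactly one non-neighbour $w$ in $K$ with $nbr > w$, hence lies in $A$; in particular this automatically excludes every $nbr \in K$ (its only non-neighbour in $K$ is itself, so $nbr > w$ fails), every $nbr$ adjacent to all of $K$, and every $nbr$ missing two or more edges to $K$ (the ``all but exactly one'' clause fails). For completeness, each $a \in A$ passes the test by definition, so the only remaining point is that $a$ is actually enumerated, i.e. $a \in N_u \cup N_v$ for the two fixed distinct vertices $u,v \in K$; but $a$ fails adjacency to only one vertex of $K$ and $|K| = k-1 \ge 2$, so $a$ is adjacent to at least one of $u,v$. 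Since the loop ranges over the set $N_u \cup N_v$, each such $a$ is examined exactly once, so $f'$ terminates at $|A| = f(K)$.

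The one genuinely substantive step, and the place I would be most careful, is the completeness claim for the candidate set $N_u \cup N_v$: the reason two neighbourhoods suffice is exactly that a \koc{} omits only a single edge, so a vertex extending $K$ to a \koc{} must be adjacent to all but one vertex of $K$, hence to at least one of any two fixed vertices of $K$ (this quietly needs $k \ge 3$, which holds for \kocs{}). Everything else — the strict inequality killing the $nbr \in K$ case, the ``all but exactly one'' clause ruling out $k$-cliques and vertices missing $\ge 2$ edges, and the agreement of the test $nbr > w$ with the tie-break fixed in \Clm{kocs} — is routine bookkeeping once the bijection above is written down.
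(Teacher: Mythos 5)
Your proof is correct and follows essentially the same route as the paper's (which is far terser): identify the extension vertices $a$ that turn $K$ into a \koc{}, observe that each such $a$ misses exactly one vertex of $K$ and hence lies in $N_u \cup N_v$ (the completeness point you rightly single out), and match the test $nbr > w$ to the tie-break so that each \koc{} is charged to exactly one of its two $k-1$-cliques. One caveat: you assert that the tie-break of \Clm{kocs} assigns $J$ to $K$ precisely when the added vertex $a$ is the \emph{larger} endpoint of the missing edge, but the paper's Definition literally makes $K = J \setminus \{a\}$ the lower-order $k-1$-clique precisely when $a$ is the \emph{smaller} endpoint (the lower-order clique is defined as $J \setminus \{u\}$ with $u<v$). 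This is an inconsistency internal to the paper rather than a gap in your argument --- the algorithm's test $nbr>w$ matches the convention announced in \Sec{main} ($u \in K$), not the Definition as written --- and since either convention counts each \koc{} exactly once, $\sum_K f(K)$ and hence the estimator are unaffected; your proof is sound under the convention the algorithm actually implements.
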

\begin{proof}
For any $nbr\in V$, if $K \cup \{nbr\}$ is a \koc, then either $nbr \in N_u$ or $nbr \in N_v$ or both.
 For a given $K$, \funckoc{} finds the set of $nbr$ ($nbrs$) that are connected to every vertex in $K$ except one. Thus, every $\{nbr\} \cup K$ for $nbr \in nbrs$ is a \koc{} and it is counted in $f'$ iff $K$ is a lower-order $k-1$-clique. Thus, the value returned, $f'=f(K)$.
\end{proof}



\begin{theorem}
Let $d_{max}$ be the maximum degree of any vertex in $G$. Then $B=min(2d_{max}, n)$ and $T_f=O(d_{max})$ for \funckoc{}.

\end{theorem}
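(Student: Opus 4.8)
The plan is to read both bounds straight off the code of \funckoc{}, using the previously established fact that \funckoc{}$(G,K)$ returns exactly $f(K)$ on input a $(k-1)$-clique $K$. So it suffices to (a) bound the returned value $f'$ and (b) bound the running time of the routine, both in terms of $d_{max}$.

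First I would handle the bound $B$ on $f$. The value $f'$ is incremented only inside the loop over elements $nbr$ of the set $nbrs = N_u \cup N_v$, where $u,v$ are the two (distinct) vertices of $K$ chosen in the routine; hence $f(K) = f' \le |nbrs|$. Now $|nbrs| = |N_u \cup N_v| \le |N_u| + |N_v| \le 2d_{max}$ by definition of $d_{max}$, and trivially $|nbrs| \le |V| = n$ since $nbrs \subseteq V$. Therefore $f(K) \le \min(2d_{max}, n)$ for every $(k-1)$-clique $K$, so we may take $B = \min(2d_{max}, n)$.

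Next I would bound $T_f$. Constructing $nbrs = N_u \cup N_v$ costs $O(|N_u| + |N_v|) = O(d_{max})$ time, assuming adjacency is stored so that union and membership are handled in time linear in the list sizes (e.g.\ hash sets), which is the standard model used elsewhere in the paper. The loop then touches each $nbr \in nbrs$ exactly once, and for each it tests adjacency of $nbr$ to the $k-1$ vertices of $K$ and evaluates the condition $nbr > w$, at cost $O(k)$ per neighbor using constant-time adjacency queries. Hence the loop costs $O(|nbrs| \cdot k) \le O(k \cdot \min(2d_{max},n))$. Since $k$ is treated as a constant throughout, this is $O(d_{max})$, and adding the $O(d_{max})$ set-up cost still gives $T_f = O(d_{max})$.

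There is no genuine obstacle here; the statement is essentially a bookkeeping consequence of the routine's definition. The only points needing care are invoking the preceding claim so that a bound on the returned $f'$ is a bound on $f(K)$ itself, using the crude inequality $|N_u \cup N_v| \le |N_u| + |N_v|$ rather than the exact union size, and absorbing the $O(k)$ per-neighbor adjacency checks into $O(d_{max})$ under the convention that $k = O(1)$ (otherwise the time bound reads $T_f = O(k\,d_{max})$).
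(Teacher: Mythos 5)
Your argument is correct and matches the paper's own proof: both bound $f(K)$ by $|N_u\cup N_v|\le \min(2d_{max},n)$ and read the $O(d_{max})$ running time directly off the loop structure of \funckoc{}. Your explicit note that the per-neighbor adjacency check costs $O(k)$ (absorbed since $k=O(1)$) is a minor refinement of the paper's blanket $O(1)$ claim, but the approach is the same.
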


\begin{proof}
By \Clm{kocs}, $F=$total number of \kocs{} in $G$. For any \koc{} $J = K \cup \{nbr\}$ that $K$ is the lower-order $k-1$-clique in, either $nbr \in N_u$ or $nbr \in N_v$ or both. Thus the number of \kocs{} in which it is the lower-order $k-1$-clique is atmost $2d_{max}$. On the other hand, there can be atmost $n$ $nbr$, thus $B=min(2d_{max}, n)$. 
Finding $nbrs$ takes time $O(d_{max})$ and checking if $nbr \in nbrs$ forms a \koc{} with $K$ takes time $O(1)$. Hence, $T_f=O(d_{max})$

\end{proof}

\subsection{Counting Type 1, \ktcs{}}
\begin{algorithm}
\caption{\funcktc$(G,K)$ 
}
\label{alg:funcktc}
$f'=0$ \\
For $u \in K$: \\
\ \ \ \ For $v \in K, v>u$: \\
\ \ \ \ \ \ \ \ Let $nbrs$ be the set of vertices connected to all vertices in $K$ except $u$ and $v$ \\
\ \ \ \ \ \ \ \ $f' = f' + |nbrs|$ \\
return $f'$
\end{algorithm}

\begin{claim}
Let $f(K)$ for $k-1$-clique $K$ denote the number of Type 1 \ktcs{} that $K$ is contained in. Then $F=\sum\limits_{K' \in C_{k-1}(G)}{f(K')}=$ the total number of Type 1 \ktcs{} in $G$.
\end{claim}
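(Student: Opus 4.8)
The plan is to mirror the proof of \Clm{kocs}: establish that every Type 1 \ktc{} contains exactly one $k-1$-clique, so that the potential double-counting collapses and $\sum_{K' \in C_{k-1}(G)} f(K')$ equals the total number of Type 1 \ktcs{} in $G$.

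First I would unpack the definition to pin down the structure of a Type 1 \ktc{} $J$. It is a $k$-vertex set with exactly two missing edges that share a vertex; since two distinct edges share at most one vertex, this common vertex $a$ is uniquely determined by $J$, and the two missing edges are $(a,b)$ and $(a,c)$ with $b \neq c$, while every other pair inside $J$ is an edge of $G$. Thus all non-edges of the induced subgraph $G[J]$ are incident to $a$.

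Next I would identify the embedded $k-1$-clique. Any $(k-1)$-subset of $J$ is exactly $J \setminus \{x\}$ for a single vertex $x \in J$. Deleting $a$ removes both non-edges, so $G[J \setminus \{a\}]$ is complete, i.e.\ $J \setminus \{a\}$ is a $k-1$-clique. For any other choice of $x$: deleting $b$ still leaves the non-edge $(a,c)$; deleting $c$ still leaves $(a,b)$; and deleting any $x \notin \{a,b,c\}$ leaves both non-edges. Hence $J \setminus \{a\}$ is the unique $k-1$-clique contained in $J$, so $J$ contributes to $f(K')$ for exactly one $K' \in C_{k-1}(G)$.

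Finally I would assemble the count: summing $f(K') = $ (number of Type 1 \ktcs{} containing $K'$) over all $K' \in C_{k-1}(G)$ counts each Type 1 \ktc{} once for each $k-1$-clique it contains, which by the previous step is exactly once; therefore $F = \sum_{K' \in C_{k-1}(G)} f(K')$ is the total number of Type 1 \ktcs{} in $G$. There is no serious obstacle here; the only point requiring care is the well-definedness observation that the shared vertex $a$ is forced and unique (a consequence of having exactly two edges missing), together with the short case analysis ruling out every deleted vertex other than $a$.
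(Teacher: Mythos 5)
Your proof is correct and takes essentially the same approach as the paper's: the paper simply asserts (with reference to the figure) that every Type 1 $(k,2)$-clique contains exactly one $k-1$-clique and concludes, whereas you supply the short case analysis justifying that assertion. No discrepancy.
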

\begin{proof}
Every Type 1 \ktc{} contains exactly $1$ $k-1$-clique (\Fig{types-annotated}). Thus, $\sum\limits_{K' \in C_{k-1}(G)}{f(K')}=F=$ the total number of Type 1 \ktcs{} in $G$.
\end{proof}

\begin{claim}
For input $k-1$-clique $K$, \funcktc{} returns $f(K)$.
\end{claim}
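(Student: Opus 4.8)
The plan is to exhibit a bijection between the unit increments that \funcktc{} makes to $f'$ and the Type 1 \ktcs{} that contain $K$. The structural fact that drives everything is the following: if $J$ is a Type 1 \ktc{} with $K \subseteq J$, then $J = K \cup \{w\}$ for a single vertex $w \notin K$ (since $|J| = k$ and $|K| = k-1$), and both missing edges of $J$ must be incident to $w$. Indeed, $K$ is a \kc{} on $k-1$ vertices, so every pair inside $K$ is an edge; hence any edge absent from $J$ has an endpoint outside $K$, i.e.\ at $w$. Consequently the two missing edges are $(w,u)$ and $(w,v)$ for two distinct $u,v \in K$, or equivalently $w$ is adjacent to every vertex of $K$ except exactly $u$ and $v$ --- which is precisely the membership condition for $nbrs$ in the iteration of \funcktc{} that processes the unordered pair $\{u,v\}$. (This also re-derives the observation used in the preceding claim, that a Type 1 \ktc{} contains exactly one $k-1$-clique: dropping $w$ gives $K$, while dropping any $x \in K$ leaves a set still containing $w$ together with at least one of $u,v$, hence still missing an edge.)

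For the forward direction I would fix a Type 1 \ktc{} $J \supseteq K$, write $J = K \cup \{w\}$, and let $u<v$ be the two vertices of $K$ that are non-adjacent to $w$ in $J$, as above. When the double loop of \funcktc{} reaches the pair $(u,v)$, the vertex $w$ lies in $nbrs$ and contributes $1$ to $f'$. I would then check that $J$ accounts for no other increment: $w$ can belong to $nbrs$ only for the pair $\{u,v\}$, since that pair equals the set of vertices of $K$ non-adjacent to $w$; and no outside vertex $w' \neq w$ produces the same set $J = K \cup \{w'\}$. Thus distinct Type 1 \ktcs{} containing $K$ give rise to distinct increments.

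For the reverse direction, suppose that while processing the pair $(u,v)$, $u<v$, some vertex $w \in nbrs$, so $w$ is adjacent to all of $K \setminus \{u,v\}$ and non-adjacent to both $u$ and $v$. Since $K$ is a clique, every vertex of $K$ is adjacent to all the others, which forces $w \notin K$. Then $J := K \cup \{w\}$ is a $k$-vertex set in which exactly the edges $(w,u)$ and $(w,v)$ are missing, and these share $w$, so $J$ is a Type 1 \ktc{} containing $K$. The map (increment) $\mapsto J$ is injective, since from $J$ one recovers $w$ as the unique element of $J \setminus K$ and $\{u,v\}$ as the pair of vertices of $K$ non-adjacent to $w$. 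Combining the two directions gives $f' = |\{\text{Type 1 \ktcs{} containing } K\}| = f(K)$.

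I expect the only genuinely delicate point to be pinning down the intended semantics of ``$nbrs$ = the set of vertices connected to all vertices in $K$ except $u$ and $v$'': this must mean that $w$ is adjacent to every vertex of $K \setminus \{u,v\}$ \emph{and} non-adjacent to $u$ and to $v$ (exactly two missing edges), not merely adjacent to $K \setminus \{u,v\}$ with no constraint on $u,v$; under the looser reading a vertex adjacent to all of $K$ would be counted once for each of the ${k-1 \choose 2}$ pairs and the count would be wrong. Once that reading is fixed, everything reduces to the clean bijection above, and the structural observation that both missing edges of a Type 1 \ktc{} containing $K$ are forced onto the single outside vertex $w$ is exactly what makes the pairing unambiguous.
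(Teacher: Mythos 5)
Your proof is correct and follows essentially the same route as the paper's: iterate over unordered pairs $u,v \in K$ and observe that each vertex adjacent to all of $K$ except exactly $u$ and $v$ extends $K$ to a Type 1 \ktc{} whose two missing edges share that vertex. Your write-up is merely more explicit than the paper's about the converse direction and the absence of double counting (each such near-clique is hit for exactly one pair), and your reading of the $nbrs$ condition matches the intended one.
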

\begin{proof}
Given $K$, for every distinct pair of vertices $u$ and $v \in K, v > u$, \funcktc{} finds the set of vertices $nbrs$ such that $\forall nbr \in nbrs$, $nbr$ is connected to all vertices in $K$ except $u$ and $v$. Thus, $K \cup \{nbr\}$ is a \kc{} with exactly 2 edges missing - $(u,nbr)$ and $(v,nbr)$ with the missing edges having a vertex in common $(nbr)$ i.e. it is a Type 1 \ktc{}. 
Thus, \funcktc{} returns the number of Type 1 \ktcs{} that $K$ is contained in i.e. it returns $f(K)$.
\end{proof}

\begin{theorem}
$B=min(3d_{max},n)$, $T_f = O(d_{max})$ for \funcktc{}.

\end{theorem}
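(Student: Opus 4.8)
The plan is to follow the same template used for the theorem bounding $B$ and $T_f$ for \funckoc{}, now invoking the two facts already established for Type 1 \ktcs{}: that summing $f$ over all $(k-1)$-cliques yields $F$, the total number of Type 1 \ktcs{} in $G$, and that on input a $(k-1)$-clique $K$, \funcktc{} returns $f(K)$, the number of Type 1 \ktcs{} containing $K$. So it suffices to argue a uniform bound on $f(K)$ and to analyze the cost of one call to \funcktc{}.

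For the bound on $B$, I would first note that a vertex $nbr$ contributes to $f(K)$ exactly when $K \cup \{nbr\}$ is a Type 1 \ktc{}, which forces $nbr$ to be adjacent to all but exactly two vertices of $K$, namely the two endpoints of the missing pair that share the vertex $nbr$. Each such $nbr$ is placed in $nbrs$ by \funcktc{} for the unique pair $(u,v)\subseteq K$ of vertices it misses, so $f(K)$ is just the number of distinct such vertices. Since Type 1 \ktcs{} arise only for $k \ge 4$, we have $|K| = k-1 \ge 3$; fixing any three vertices $a,b,c \in K$, a valid $nbr$ misses at most two of $\{a,b,c\}$ and hence is a neighbor of at least one of them. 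Therefore the valid $nbr$'s all lie in $N_a \cup N_b \cup N_c$, giving $f(K) \le |N_a| + |N_b| + |N_c| \le 3d_{max}$; and $f(K) \le n$ trivially since there are only $n$ candidate vertices. Taking the minimum gives $B = \min(3d_{max}, n)$.

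For $T_f$, observe that \funcktc{} ranges over the $O(1)$ ordered pairs $u<v$ in $K$, and for each it must construct $nbrs$. The efficient implementation enumerates the neighborhood of a single vertex $w \in K\setminus\{u,v\}$ (nonempty since $|K|\ge 3$, and any valid $nbr$ must be adjacent to $w$), and for each of the at most $d_{max}$ candidates tests in $O(k)=O(1)$ time, using $O(1)$ adjacency lookups, whether it is adjacent to exactly the vertices of $K\setminus\{u,v\}$. This is $O(d_{max})$ per pair and hence $O(d_{max})$ overall, so $T_f = O(d_{max})$.

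The only point requiring a little care — exactly as in the \funckoc{} analysis — is to make explicit that the line ``let $nbrs$ be the set of vertices connected to all vertices in $K$ except $u$ and $v$'' need not scan all of $V$: scanning $N_w$ for one fixed $w\in K\setminus\{u,v\}$ suffices, which is precisely what turns a naive $O(n)$ bound into $O(d_{max})$. Beyond that, the argument is a routine restatement of the corresponding one for $(k,1)$-cliques, so I do not anticipate any real obstacle.
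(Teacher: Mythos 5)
Your proof is correct and follows essentially the same approach as the paper: the bound $B=\min(3d_{max},n)$ comes from fixing three vertices of $K$ and noting any valid $nbr$ must be adjacent to at least one of them, and $T_f=O(d_{max})$ comes from the $O(1)$ pairs $(u,v)$ each costing $O(d_{max})$. You are somewhat more explicit than the paper about why constructing $nbrs$ costs only $O(d_{max})$ (scanning $N_w$ for a fixed $w\in K\setminus\{u,v\}$), which is a worthwhile clarification but not a different argument.
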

\begin{proof}
For any $3$ vertices $u,v,w \in K$ and for any \ktc{} $J = K \cup \{nbr\}$ that $K$ is contained in, atleast one of $(u,nbr), (v,nbr), (w,nbr) \in E(G)$. Thus, any $K$ can be a part of atmost $min(3d_{max},n)$ Type 1 \ktcs{}. For every pair $(u,v)$ in $K$, \funcktc{} calculates the number of vertices connected to all in $K$ but $u$ and $v$ which takes time $O(d_{max})$. Thus, $T_f = O(d_{max})$. 
\end{proof}

\subsection{Counting Type 2 \ktcs{}}

\begin{algorithm}
\caption{\funcktcc$(G,K)$ 
}
\label{alg:funcktcc}
$f'=0$ \\
Let $degen(u)$ denote the position of $u$ in the degeneracy order of $G$. \\
For $u \in K$: \\ \label{step:loopu}
\ \ \ \ For $w \in K, degen(w)>degen(u)$: \\ \label{step:loopw}
\ \ \ \ \ \ \ \ Let $nbrsu = N^+_u$ be the set of out-nbrs of $u$ such that they are connected to all vertices in $K$ except $w$ and $\forall nbru \in nbrsu, degen(w) < degen(nbru)$.\\ \label{step:conditionu}
\ \ \ \ \ \ \ \ Let $nbrsw$ be the set of neighbors of $w$ in $G$ such that they are connected to all vertices in $K$ except $u$ \\ \label{step:conditionw}
\ \ \ \ \ \ \ \ For $x \in nbrsu$: \\
\ \ \ \ \ \ \ \ \ \ \ \ For $v \in nbrsw$:\\
\ \ \ \ \ \ \ \ \ \ \ \ \ \ \ \ If $(nbru, nbrw) \in E(G): f'=f'+1$ \\ \label{step:xv}
return $f'$
\end{algorithm}

\begin{definition}
Given a Type 2 \ktc{} $J$, $v,x \in J$, the set $K=J \setminus \{v,x\}$ is the lowest order $k-2$-clique of $J$ if it fulfills all the following conditions:
\begin{enumerate}
\item $(u,v) \notin E(G)$, $(w,x) \notin E(G)$ (note that this implies that $K$ is a $k-2$-clique). \label{prop:missing}
\item $degen(u)<degen(v)$ \label{prop:umin}
\item $degen(u)<degen(w)<degen(x)$.\label{prop:degen-order}
\end{enumerate}
\end{definition}
Note that $u,v,w$ and $x$ are all distinct and $J$ consists of exactly 4, $k-2$-cliques: $J \setminus \{v,x\}$, $J \setminus \{v,w\}$, $J \setminus \{u,x\}$ and $J \setminus \{u,w\}$ (\Fig{types-annotated}), and the lowest order $k-2$-clique of $J$ is the one which has the vertex $(u)$ with minimum position in the degeneracy ordering of $G$ and the minimum neighbor of $u$.
\begin{claim} \label{clm:ktcs2}
Let $f(K)$ for $k-2$-clique $K$ denote the number of Type 2 \ktcs{} that $K$ is the lowest-order $k-2$-clique in. Then $F=\sum\limits_{K' \in C_{k-2}(G)}{f(K')}=$ total number of Type 2 \ktcs{} in $G$.
\end{claim}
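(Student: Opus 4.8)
The plan is to follow the template of \Clm{kocs}: produce a well-defined map $\sigma$ that sends each Type 2 \ktc{} $J$ of $G$ to a single $k-2$-clique of $J$ --- namely its lowest-order $k-2$-clique, as specified by the Definition preceding this claim --- so that $f(K')$ is exactly the number of Type 2 \ktcs{} lying in the fiber $\sigma^{-1}(K')$. Summing the fiber sizes over all $K' \in C_{k-2}(G)$ then counts every Type 2 \ktc{} of $G$ exactly once, which is what the claim asserts. The crux, therefore, is showing that the Definition really does pick out a \emph{unique} $k-2$-clique of each $J$.

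First I would make the set of $k-2$-cliques of a fixed Type 2 \ktc{} $J$ explicit. Since $J$ is a \kc{} missing exactly two edges whose four endpoints are all distinct (this is precisely what ``Type 2'' means), and the near-cliques are induced, $J$ has exactly these two non-edges; write the two missing pairs as $\{p_1,p_2\}$ and $\{q_1,q_2\}$. A subset of $J$ is a clique in $G$ iff it omits at least one endpoint of each missing pair; deleting exactly two vertices from $J$ therefore forces deleting exactly one endpoint from each pair (deleting both endpoints of one pair would leave the disjoint other pair entirely inside, hence a non-edge). Thus the $k-2$-cliques contained in $J$ are exactly the four sets $J \setminus \{\beta,\delta\}$ with $\beta \in \{p_1,p_2\}$ and $\delta \in \{q_1,q_2\}$, recovering the four $k-2$-cliques listed after the Definition.

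Next I would verify that conditions (1)--(3) of the Definition isolate exactly one of these four. Given a candidate $K = J \setminus \{v,x\}$: condition (1) forces $\{v,x\}$ to hit each missing pair (otherwise $K$ still contains a non-edge and is not a $k-2$-clique), and then $u$ and $w$ are determined as the partners of $v$ and of $x$ in their respective missing pairs. Conditions (2) and (3) then say that $v$ is the degeneracy-larger endpoint of its pair and $x$ the degeneracy-larger endpoint of the other, while the extra requirement $degen(u)<degen(w)$ only fixes which of $u,w$ gets which label (and hence which of $v,x$ gets which label), not the underlying set $K$. Because the degeneracy ordering is a strict total order there are no ties to break, so this identifies the unique $k-2$-clique of $J$, namely the one obtained from $J$ by deleting the degeneracy-larger endpoint of each of the two missing edges. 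Define $\sigma(J)$ to be this $k-2$-clique; by the above it is well-defined on all Type 2 \ktcs{} of $G$.

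Finally, since $f(K')$ counts exactly the Type 2 \ktcs{} whose lowest-order $k-2$-clique is $K'$ --- that is, $f(K') = |\sigma^{-1}(K')|$ --- and the fibers $\sigma^{-1}(K')$ partition the set of all Type 2 \ktcs{} of $G$, we obtain
\[
\sum_{K' \in C_{k-2}(G)} f(K') \;=\; \sum_{K' \in C_{k-2}(G)} \bigl|\sigma^{-1}(K')\bigr| \;=\; F,
\]
as claimed. The only nontrivial step is the uniqueness argument in the third paragraph; everything else is bookkeeping directly parallel to the \koc{} case (\Clm{kocs}), and the structural fact that makes it go through is that a Type 2 \ktc{} has two missing edges with four \emph{distinct} endpoints that carry a strict degeneracy order.
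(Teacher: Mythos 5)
Your proof is correct and follows essentially the same route as the paper: each Type 2 \ktc{} has exactly one lowest-order $k-2$-clique, so summing $f$ over all $k-2$-cliques counts every such near-clique exactly once. The only difference is that you spell out the uniqueness argument (the four contained $k-2$-cliques and why the degeneracy conditions isolate exactly one), which the paper's two-sentence proof simply asserts via the preceding definition.
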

\begin{proof}
Every Type 2 \ktc{} has exactly one lowest order $k-2$-clique in it. If $f(K)$ denotes the number of Type 2 \ktcs{} that $K$ is the lowest-order $k-2$-clique in, then $\sum\limits_{K' \in C_{k-2}(G)}{f(K')}=$total number of Type 2 \ktcs{} in $G$.
\end{proof}

\begin{claim}
For input $k-2$-clique $K$, \funcktcc{} returns $f(K)$.
\end{claim}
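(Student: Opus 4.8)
My plan is to prove the claim by exhibiting a bijection between the iterations of the innermost loop of \funcktcc{} that increment $f'$ and the Type 2 \ktcs{} $J$ for which $K$ is the lowest-order $(k-2)$-clique; since $f(K)$ is by definition the number of the latter, this yields $f' = f(K)$. Each successful iteration is pinned down by a tuple $(u,w,x,v)$ with $u,w \in K$, $degen(w) > degen(u)$, $x \in nbrsu$, $v \in nbrsw$, and $(x,v) \in E(G)$; I associate to it the vertex set $J := K \cup \{v,x\}$, and conversely to a qualifying \ktc{} $J$ I associate the tuple read off from its two missing edges.

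For the forward direction I would first check $|J| = k$: membership of $x$ in $nbrsu$ forces $x$ non-adjacent to $w \in K$, so $x \notin K$; likewise $v$ is non-adjacent to $u \in K$, so $v \notin K$; and $x$ is adjacent to $u$ whereas $v$ is not, so $v \neq x$. Moreover $u,w,x,v$ are pairwise distinct, so the candidate missing edges $(u,v)$ and $(w,x)$ are vertex-disjoint. Next I would enumerate the edges of $J$: $K$ is complete; by the definitions of $nbrsu$ and $nbrsw$, $x$ is adjacent to every vertex of $K \setminus \{w\}$ but not to $w$, and $v$ is adjacent to every vertex of $K \setminus \{u\}$ but not to $u$; and the loop test gives $(x,v) \in E(G)$. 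Hence $J$ misses exactly the two vertex-disjoint edges $(u,v)$ and $(w,x)$, so $J$ is a Type 2 \ktc{} with $K = J \setminus \{v,x\}$. Finally I would verify the three conditions defining ``lowest-order $(k-2)$-clique'': Property 1 is exactly the missing-edge statement just proved; Property 3, $degen(u) < degen(w) < degen(x)$, follows from the loop guard $degen(w) > degen(u)$, from $x \in N^+_u$ (which gives $degen(x) > degen(u)$), and from the side condition $degen(x) > degen(w)$ imposed in the definition of $nbrsu$; and Property 2, $degen(u) < degen(v)$, follows from the analogous side condition $degen(v) > degen(u)$ on $nbrsw$. So each successful iteration yields a distinct Type 2 \ktc{} having $K$ as its lowest-order $(k-2)$-clique.

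For the reverse direction, take a Type 2 \ktc{} $J$ whose lowest-order $(k-2)$-clique is $K$; write its two vertex-disjoint missing edges as $(u,v)$ and $(w,x)$ with $u,w \in K = J \setminus \{v,x\}$, so that by the lowest-order definition $degen(u) < degen(w) < degen(x)$ and $degen(u) < degen(v)$. I would argue $(u,w,x,v)$ is the only tuple through which \funcktcc{}$(G,K)$ can produce $J$: inside $J$, $v$ is the unique non-neighbor of $u$ and $x$ the unique non-neighbor of $w$, so knowing $J$ and $K$ determines the two missing edges and which endpoint of each lies in $K$; this leaves only the choice of which missing edge plays the role of $(u,v)$ and which plays $(w,x)$, and the loop guard $degen(w)>degen(u)$ together with the $nbrsu$-condition $degen(x)>degen(w)$ force the assignment making $u$ the vertex of minimum degeneracy among the four --- exactly the one dictated by the lowest-order definition. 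Then I would run through every membership test in the loop and confirm it passes for this tuple (essentially the converse of the forward checks), so $J$ contributes precisely $1$ to $f'$. Combining the two directions gives $f' = f(K)$.

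\textbf{Main obstacle.} The edge-counting that certifies $J$ is a Type 2 \ktc{} is routine bookkeeping; the delicate part is the degeneracy-order accounting. One must show that the loop guard $degen(w)>degen(u)$ together with the side conditions built into $nbrsu$ and $nbrsw$ are collectively equivalent to Properties 2 and 3 of the lowest-order definition, and --- crucially --- that they are exactly tight: strong enough that no \ktc{} is counted with the roles of its two missing edges swapped, nor counted by \funcktcc{}$(G,K')$ for a $(k-2)$-clique $K'$ of $J$ that is not its lowest-order one, yet weak enough that the genuine lowest-order $K$ still passes every test. Pinning down this correspondence so that it neither over- nor under-counts is the crux.
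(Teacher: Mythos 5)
Your overall strategy is the same as the paper's --- verify that the innermost loop increments $f'$ exactly once for each Type 2 \ktc{} $J$ in which $K$ is the lowest-order $(k-2)$-clique --- but you carry it out far more carefully than the paper does: you check $|J|=k$, you enumerate exactly which edges of $J$ are missing, and your reverse direction (uniqueness of the tuple $(u,w,x,v)$ producing a given $J$, ruling out the role-swap of the two missing edges via the guard $degen(w)>degen(u)$ and ruling out other choices of $u,w$ via the uniqueness of each added vertex's non-neighbor in $K$) supplies precisely the accounting that the paper's short walkthrough omits.

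There is, however, one step that does not survive contact with the pseudocode. You derive Property 2 of the lowest-order definition ($degen(u)<degen(v)$) from ``the analogous side condition $degen(v)>degen(u)$ on $nbrsw$'' --- but \funcktcc{} imposes no such condition: only $nbrsu$ carries a degeneracy constraint ($degen(x)>degen(w)$), while $nbrsw$ is merely the set of neighbors of $w$ adjacent to all of $K$ except $u$. As written, a successful iteration can therefore produce a Type 2 \ktc{} $J=K\cup\{x,v\}$ with $degen(v)<degen(u)$, for which $K$ is \emph{not} the lowest-order $(k-2)$-clique of $J$ (that role is played by $J\setminus\{u,x\}$, which contains $v$), so the forward direction of your bijection fails and the procedure over-counts. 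To be fair, the paper's own proof has the identical gap --- it asserts that $f'$ is incremented iff all the conditions of a lowest-order $(k-2)$-clique are fulfilled, without ever verifying Property 2 --- so you have correctly identified the constraint the algorithm \emph{must} enforce; you have simply attributed to it a check it does not actually perform. Your argument becomes complete once $nbrsw$ is additionally restricted to vertices $v$ with $degen(v)>degen(u)$.
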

\begin{proof}

Given a $k-2$-clique $K$, 
\Step{loopu} and \Step{loopw} loop over all possible candidates for $u$ and $w$, maintaining the condition that $degen(u)<degen(w)$. In \Step{conditionu}, \funcktcc{} picks the outneighbors of $u$ that are potential candidates for $x$ $(nbrsu)$ such that $(w,x) \notin E(G)$ and $degen(w)<degen(x)$. In \Step{conditionw}, it picks potential candidates for $v$ $(nbrsw)$ i.e. neighbors of $w$ that are connected to all vertices in $K$ except $u$. Finally, in \Step{xv}, it checks if $v$ and $x$ are connected. Thus, $f'$ in \Step{xv} is incremented iff all the conditions of a lowest order $k-2$-clique of a Type 2 \ktc{} are fulfilled. Thus, the returned value $f'=f(K)$.
\end{proof}

\begin{theorem}
$B=min(n^2, k^2\alpha d_{max}/2)$, $T_f = O(\alpha+d_{max})$ for \funcktcc{}.

\end{theorem}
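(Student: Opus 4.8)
Both claims concern the function $f$ that \funcktcc{} computes: by \Clm{ktcs2} together with the claim (just above this theorem) that \funcktcc{} returns $f(K)$, the quantity $f(K)$ is exactly the number of Type 2 \ktcs{} for which the $(k-2)$-clique $K$ is the lowest-order $(k-2)$-clique. The plan is to (i) reinterpret $f(K)$ as a count of configurations $(u,w,x,v)$ and bound that count in two different ways to get $B$, and (ii) bound the running time of \funcktcc{} by charging it against a constant number of neighborhood scans.

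\textbf{The bound $B$.} Each Type 2 \ktc{} $J$ counted by $f(K)$ is the induced subgraph on $K \cup \{v,x\}$, where $v,x$ are the two vertices added to $K$ and $(u,v),(w,x)$ are the two vertex-disjoint missing edges with $u,w \in K$. Since $J$ is determined by its vertex set, it is determined by the pair $\{v,x\} \subseteq V \setminus K$, and there are at most $n^2$ such pairs, so $f(K) \le n^2$. For the other bound, partition the counted $J$'s according to the unordered pair $\{u,w\} \subseteq K$ of missing-edge endpoints lying in $K$; there are at most $\binom{k-2}{2} \le k^2/2$ such pairs, and for a fixed pair the labelling of $u$ versus $w$ is pinned down by the degeneracy conditions in the definition of ``lowest-order $(k-2)$-clique''. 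The key observation is that the only missing edges of $J$ are $(u,v)$ and $(w,x)$, so $(u,x) \in E(G)$, and since $degen(u) < degen(x)$ we get $x \in N^+_u$ --- at most $\alpha$ choices; likewise $(w,v) \in E(G)$, so $v \in N_w$ --- at most $d_{max}$ choices. Hence each pair $\{u,w\}$ contributes at most $\alpha d_{max}$ configurations, so $f(K) \le (k^2/2)\,\alpha d_{max}$, and taking the smaller of the two estimates gives $B = \min(n^2,\,k^2 \alpha d_{max}/2)$.

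\textbf{The bound $T_f$.} Treating the clique size $k$ as a constant, the outer loops of \funcktcc{} over $u,w \in K$ contribute only an $O(1)$ factor. For each pair $(u,w)$, building $nbrsu$ is a scan of $N^+_u$ (size $\le \alpha$) with $O(k)$ adjacency-to-$K$ and order checks per candidate, costing $O(\alpha)$; building $nbrsw$ is a scan of $N_w$ (size $\le d_{max}$) with $O(k)$ checks per candidate, costing $O(d_{max})$. Summing gives $T_f = O(\alpha + d_{max})$.

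\textbf{Main obstacle.} The substantive step is the degeneracy-sensitive half of the $B$ bound: one must verify that the missing edges of a counted $J$ are \emph{exactly} $(u,v)$ and $(w,x)$ (so that $(u,x),(w,v) \in E(G)$), and that the ordering forces $x$ into an \emph{out}-neighborhood (yielding the $\alpha$ factor) while $v$ only lies in a full neighborhood (yielding $d_{max}$) --- this asymmetry is precisely what separates $\alpha d_{max}$ from $d_{max}^2$. The other point requiring care is the innermost double loop of \funcktcc{} over $nbrsu \times nbrsw$: the naive bound on it is $O(\alpha d_{max})$ per pair $(u,w)$, so to match the stated $O(\alpha + d_{max})$ one must account for the $(x,v)$-adjacency tests more carefully --- e.g.\ by enumerating the edges between $nbrsu$ and $nbrsw$ using a hash set on the smaller of the two sets --- and reconciling this enumeration with the claimed running time is the delicate part of the $T_f$ statement.
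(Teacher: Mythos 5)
Your argument for $B$ is essentially identical to the paper's: at most $k^2/2$ choices of $\{u,w\}\subseteq K$, at most $\alpha$ choices of $x$ (an out-neighbor of $u$), at most $d_{max}$ choices of $v$ (a neighbor of $w$), against the trivial $n^2$ bound on pairs $\{v,x\}$. You actually justify the step the paper leaves implicit --- that $(u,x)\in E(G)$ because the only missing edges are $(u,v)$ and $(w,x)$, and that $degen(u)<degen(x)$ then forces $x\in N^+_u$ --- which is exactly the asymmetry that yields $\alpha d_{max}$ rather than $d_{max}^2$. That part is correct and complete.

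On $T_f$, your accounting of the neighborhood scans (cost $O(\alpha)$ to build $nbrsu$, $O(d_{max})$ to build $nbrsw$, times an $O(k^2)$ outer loop treated as constant) is precisely the paper's proof; the paper stops there and never charges the innermost loop over $nbrsu\times nbrsw$ at all. The obstacle you flag is therefore not a defect of your write-up relative to the paper --- it is a gap in the paper's own proof. In fact it cannot be patched for the algorithm as written: the returned value $f'$ can itself be as large as $\Theta(k^2\alpha d_{max})$ (that is exactly what the bound $B$ asserts), so any implementation that increments $f'$ by one per configuration must take $\Omega(f')=\Omega(\alpha d_{max})$ time in the worst case, and no hashing trick brings the pairwise adjacency tests down to $O(\alpha+d_{max})$. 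The honest per-sample cost of \funcktcc{} as stated is $O(k^2(\alpha+d_{max})+k^2\alpha d_{max})$, i.e.\ $O(\alpha d_{max})$ for constant $k$; the stated $T_f=O(\alpha+d_{max})$ only holds if one counts just the construction of the candidate sets, as the paper implicitly does. So: same approach throughout, with your version being the more careful one, and you have correctly identified the one step where the claimed running time is not actually established.
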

\begin{proof}
Given $K$, there can be atmost $k^2/2$ candidates for $(u,w)$. There can be at most $\alpha$ candidates for $x$ (since it has to be an outneighbor of $u$) and atmost $d_{max}$ candidates for $v$ (neighbors of $w$). On the other hand, there can be atmost $n$ candidates for $x$ and $v$ each. Thus, $B=min(n^2, k^2\alpha d_{max}/2)$

Given a set of $k-2$ vertices, it takes $O(k^2)$ time to check if it forms a clique. There are $O(k^2)$ candidates for $(u,w)$ each. There are atmost $\alpha$ candidates for $x$ and $d_{max}$ candidates for $v$ whose connections to each of the $k-2$ vertices need to be checked. This takes time $O(\alpha+d_{max})$. Altogether, $T_f = O(\alpha+d_{max})$. 
\end{proof}

\section{Experimental Results} \label{sec:results}
\textbf{Preliminaries:} We implemented our algorithms in {\tt C++} and ran our experiments on a
commodity machine equipped with a 1.4GHz AMD Opteron(TM) processor 6272 with 8~cores
and  2048KB  L2 cache (per core), 6144KB L3 cache, and  128GB memory. We performed our experiments on a collection of graphs from
SNAP~\cite{Snap}, including social networks, web networks, and infrastructure networks.
 The largest graph
has more than 100M edges. Basic properties like degneracy, maximum degree etc. of these graphs are presented in Table~\ref{tab:GraphStats}. 
We consider the graph to be simple and undirected. Code for all experiments is available at: \href{https://bitbucket.org/sjain12/counting-near-cliques}{https://bitbucket.org/sjain12/counting-near-cliques}

Our practical implementation differs slightly from \mainalg{} in two ways: we fix the number of samples to 500K. Moreover, since the number of samples are fixed, we can sample from $D$ in \mainalg{} all at once and maintain counts of the number of cliques to be sampled from each outneighborhood. We can then explore the outneighborhoods in an online fashion, sampling as we build the shadow. Once the samples from a vertex's outneighborhood have been ontained, we no longer need the shadow of the outneighborhood and the shadow can be discarded. Thus, we don't need to store the entire shadow but only the shadow of the current vertex's outneighborhood. 

We focus on counting near-$k$-cliques for $k$ ranging from $5$ to $10$.

\begin{table*}[]
\scriptsize
\centering
\begin{tabular}{|l|l|l|r|r|r|l|r|r|l|r|r|l|r|r|}
\hline

\multicolumn{5}{|c|}{}  & \multicolumn{3}{c|}{\textbf{k=5}}                       & \multicolumn{3}{c|}{\textbf{k=7}}                       & \multicolumn{3}{c|}{\textbf{k=10}}      &              \\ \cline{7-7}
\hline
\textbf{graph}         & \textbf{vertices}     & \textbf{edges}        & \textbf{degen}        & \textbf{$d_{max}$}  &\textbf{estimate} & \textbf{\% error} & \textbf{time} & \textbf{estimate} & \textbf{\% error} & \textbf{time} & \textbf{estimate} & \textbf{\% error} & \textbf{time}  & \textbf{type} \\ \cline{7-7}
\hline
\multirow{4}{*}{web-Stanford} &\multirow{4}{*}{2.82E+05}              &\multirow{4}{*}{1.99E+06}              &\multirow{4}{*}{71}                    &\multirow{4}{*}{38625}           & 2.36E+10          & 0.85                 & 142            & 8.99E+11          & -                 & 216            & 2.16E+14          & -                 & 129     & $(k,1)$        \\
& & & & & 1.15E+11          & 0.46                 & 8283            & 7.33E+11          & -                 & 3802            & 1.12E+14          & -                 & 1087      & $(k,2)$ Type 1     \\
& & & &  &1.12E+10          & 1.19                & 5396            & 2.51E+11          & -                 & 538            & 1.04E+14          & -                 & 293     & $(k,2)$ Type 2    \\
& & & &  & 6.21E+8          &                  &             & 3.47E+10          &                 &             & 5.82E+12          &                 &     & $k$     \\
\hline
\multirow{4}{*}{web-Google}             & \multirow{4}{*}{8.76E+05}              & \multirow{4}{*}{4.32E+06}              & \multirow{4}{*}{44}                    & \multirow{4}{*}{6332}                  & 6.76E+08          & 0.44              & 13             & 2.19E+09          & 0.45              & 12             & 2.41E+10          & 0.41              & 10            & $(k,1)$  \\
& & & & & 2.08E+09          & 0.48              & 276             & 4.45E+09          & 0.01              & 172             & 2.05E+10          & -              & 42          & $(k,2)$ Type 1   \\
& & & & & 7.18E+07          & 1.10              & 21             & 2.93E+08          & 0.01              & 18            & 7.70E+09          & 0.01              & 13          & $(k,2)$ Type 2   \\
& & & & & 1.05E+08          &               &              & 6.06E+08          &              &            & 1.29E+10          &             &          & $k$   \\
\hline
\multirow{4}{*}{amazon0601}             &\multirow{4}{*}{4.03E+05}              & \multirow{4}{*}{4.89E+06}              & \multirow{4}{*}{10}                    & \multirow{4}{*}{2752}                  & 1.17E+07          & 0.00              & 4             & 2.88E+06          & 0.01              & 3             & 3.76E+04          & 0.02              & 1.5            & $(k,1)$ \\
& & & & & 5.38E+07          & 0.01              & 10             & 7.84E+06          & 0.01              & 7             & 8.70E+04          & 0.01              & 3          & $(k,2)$ Type 1   \\
& & & & & 3.16E+06          & 0.01              & 4             & 1.30E+06          & 0.01              & 5             & 2.96E+04          & 0.00              & 3          & $(k,2)$ Type 2   \\
& & & & & 3.64E+06          &               &              & 9.98E+05          &           &             & 9.77E+03          &              &           & $k$   \\
\hline
\multirow{4}{*}{web-BerkStan}           &\multirow{4}{*}{6.85E+05}              & \multirow{4}{*}{6.65E+06}              & \multirow{4}{*}{201}                   & \multirow{4}{*}{84230}                 & 4.89E+11          & 0.93                 & 397           & 2.89E+13          & -                 & 470           & 1.85E+16          & -                 & 704          & $(k,1)$ \\
& & & & & 1.89E+12          & 0.32                 & 20534           & 7.39E+13          & -                 & 6080           & 1.43E+16          & -                 & 5383         &  $(k,2)$ Type 1 \\
& & & & & 6.61E+10          & 0.09                 & 12400           & 7.32E+11          & -                 & 605           & 1.65E+14          & -                 & 646          & $(k,2)$ Type 2 \\
& & & & & 2.19E+10          &                  &           & 9.30E+12          &                 &            & 5.79E+16          &                &        & $k$ \\
\hline
\multirow{4}{*}{as-skitter}             & \multirow{4}{*}{1.70E+06}              & \multirow{4}{*}{1.11E+07}              & \multirow{4}{*}{111}                   & \multirow{4}{*}{35455}                 & 3.94E+10          & 4.52                 & 1180           & 5.44E+11          & -                 & 1034           & 7.91E+13          & -                 & 800          & $(k,1)$ \\
& & & & & 2.97E+11          & 1.63                & 31724           & 2.48E+12          & -                 & 16220           & 2.27E+13          & -                 & 10461         & $(k,2)$ Type 1  \\
& & & & & 2.34E+10          & 1.37                & 4132           & 3.97E+11          & -                 & 2598           & 8.55E+13          & -                 & 1038         & $(k,2)$ Type 2  \\
& & & & & 1.17E+09          &                 &           & 7.30E+10          &               &            & 1.43E+13          &                &          & $k$  \\
\hline
\multirow{4}{*}{cit-Patents}            & \multirow{4}{*}{3.77E+06}              & \multirow{4}{*}{1.65E+07}             & \multirow{4}{*}{64}                    & \multirow{4}{*}{793}                   & 4.12E+07          & 0.01              & 10            & 7.20E+07          & 0.01              & 6             & 9.06E+05*          & 42.22              & 4            & $(k,1)$ \\
& & & & & 1.11E+08          & 1.83              & 17            & 1.31E+08          & 2.29              & 8             & 1.43E+06*          & 49.11              & 5          &  $(k,2)$ Type 1  \\
& & & & & 1.31E+08          & 0.01              & 6            & 6.76E+08          & 3.36             & 9             & 2.54E+07*          & 31.35              & 5           & $(k,2)$ Type 2  \\
& & & & & 3.05E+06          &               &             & 1.89E+06          &            &            & 2.55E+03          &              &            & $k$  \\
\hline
\multirow{4}{*}{soc-pokec}              & \multirow{4}{*}{1.63E+06}              & \multirow{4}{*}{2.23E+07}              & \multirow{4}{*}{47}                    & \multirow{4}{*}{14854}                 & 4.22E+08*          & 8.48              & 218            & 5.41E+07*          & 9.96              & 81            & 7.67E+08          & 4.24              & 55           & $(k,1)$ \\
& & & & & 2.40E+09*          & 6.19              & 218            & 1.59E+09*          & 4.6              & 136            & 1.67E+09          & 0.02              & 68        & $(k,2)$ Type 1   \\
& & & & & 3.34E+08          & 0.00              & 38            & 6.78E+08*          & 7.61              & 95            & 1.28E+09          & 0.01              & 64          & $(k,2)$ Type 2  \\
& & & & & 5.29E+07          &               &          & 8.43E+07          &         &          & 1.98E+08          &              &           & $k$  \\

\hline
\multirow{4}{*}{com-lj}                 & \multirow{4}{*}{4.00E+06}              & \multirow{4}{*}{3.47E+07}              & \multirow{4}{*}{360}                   & \multirow{4}{*}{14815}                 & 2.85E+11          & 0.11                 & 200           & 4.28E+14          & -                 & 452           & 1.18E+19          & -                 & 558      & $(k,1)$     \\
& & & & & 4.63E+11          & 0.34                 & 756           & 5.11E+14          & -                 & 613           & 1.22E+19          & -                 & 680          & $(k,2)$ Type 1  \\
& & & & & 5.39E+10          & 0.53                 & 269           & 1.24E+14          & -                 & 581           & 4.23E+18          & -                 & 568          & $(k,2)$ Type 2 \\
& & & & & 2.47E+11          &               &          & 4.51E+14          &            &           & 1.47E+19          &                 &          & $k$ \\
\hline
\multirow{4}{*}{soc-LJ}            & \multirow{4}{*}{4.84E+06}              & \multirow{4}{*}{8.57E+07}              & \multirow{4}{*}{372}                    & \multirow{4}{*}{20333}                 & 6.32E+11          & 0.03              & 677             & 1.01E+15          & -              & 779             & 4.14E+19          & -              & 960            & $(k,1)$ \\
& & & & & 1.03E+12          & 0.17              & 1504             & 1.27E+15          & -             & 1107             & 4.57E+19          & -              & 1320           & $(k,2)$ Type 1   \\
& & & & & 1.34E+11          & 0.41              & 506             & 2.77E+14          & -              & 1007             & 1.17E+19          & -              & 1111          & $(k,2)$ Type 2   \\
& & & & &           &               &              & 4.49E+14          &               &             &           &              &          & $k$   \\
\hline
\multirow{4}{*}{com-orkut}              & \multirow{4}{*}{3.07E+06}              & \multirow{4}{*}{1.17E+08}              & \multirow{4}{*}{253}                   & \multirow{4}{*}{33313}                 & 1.56E+11          & -                & 9507          & 2.26E+12          & -                 & 16546          & 4.66E+13          & -                 & 26370      & $(k,1)$ \\  
& & & & & 1.46E+12          & -                 & 21213          & 7.82E+12          & -                 & 24148          & 1.04E+14          & -                 & 29881      & $(k,2)$ Type 1  \\  
& & & & & 2.37E+11          & -                 & 3879          & 3.51E+12          & -                 & 11617          & 1.60E+14          & -                 & 22676     & $(k,2)$ Type 2  \\  
& & & & & 1.57E+10          &                  &        & 3.61E+11          &               &           & 3.03E+13          &                &      & $k$  \\  
\hline
\end{tabular}
\caption{Table shows the sizes, degeneracy, maximum degree of the graphs, the counts of 5, 7 and 10 cliques and near-cliques obtained using \mainalg{}, the percent relative error in the estimates (for those graphs for which we were able to get exact numbers within 24 hours), and time in seconds required to get the estimates. The rows whose types are $k$ in the rightmost column show the number of \kcs{}.For most instances, the algorithm terminated in minutes. Values marked with * have significant errors which are addressed in \Tab{CorrectedValues}}
 \label{tab:GraphStats}
\end{table*}

\textbf{Accuracy and convergence of \mainalg:}
We picked some graphs for which
the exact \nkc{} counts are known (for all $k \in [5,10]$). 
For each graph and near-clique type, for sample size in [10K,50K,100K,500K,1M], we performed 100 runs
of the algorithm. We show here results for {\tt amazon0601} for $k=7$, though similar results were observed for other graphs and $k$. We plot the spread of the output of \mainalg{}, over
all these runs. The results are shown in~\Fig{convergence}. The red line
denotes the true answer, and there is a point for the output of every
single run. As we can see, the output of \mainalg{} fast converges to the true value as we increase the number of samples. 
For 500K samples, the range of values is within $5\%$ of the true answer which is much less compared to the spread of cc. Similar results were observed for other graphs for which the exact counts were available, except soc-pokec. The error was mostly $<5\%$ and often $<1\%$ as can be seen from \Tab{GraphStats}. 

In cases like soc-pokec the error can be high. This happens when most of the samples end up empty, either because the sampled vertices did not form a clique, or the samples belonged to out-neighborhoods that did not have a clique of the required size or the sampled clique does not participate in any near-cliques. This can be detected by observing how many of the samples taken in \Step{sample-k} were cliques with non-zero $f$. If this number is $<<5000$, the estimates are likely to have substantial error. This can be remedied by either taking more samples or using \modifiedts{}. \Tab{CorrectedValues} shows the revised estimates obtained using \modifiedts{} using 500K samples, for values in \Tab{GraphStats} that have substantial error (marked with an asterisk).

For the graphs for which we could not get exact numbers (since the bf algorithm did not terminate in 1 day), we were unable to obtain error percentages. However, even for such graphs we saw good convergence over 100 runs of the algorithm.

\begin{table}[]
\scriptsize
\centering
\begin{tabular}{|l|l|c|c|c|r|}
\hline
\textbf{graph}         & \textbf{k}           & \textbf{revised estimate}        & \textbf{revised \% error} & \textbf{time}  &\textbf{type} \\ 
\hline
\multirow{3}{*}{cit-Patents} &  \multirow{3}{*}{10}  & 648944 & 1.91 & 130 & $(k,1)$        \\
&  & 2.84E+06 & 1.06 & 130 & $(k,2)$ Type 1     \\
&  & 3.69+07 & 0.27 & 130 & $(k,2)$ Type 2     \\
\hline
\multirow{2}{*}{soc-pokec} & \multirow{2}{*}{5}  & 3.91E+08 & 0.51& 284 & $(k,1)$        \\
&  & 2.27E+09& 0.44 & 371 & $(k,2)$ Type 1     \\
\hline
\multirow{3}{*}{soc-pokec} & \multirow{3}{*}{7}  & 4.92E+08 &0.01 & 288 & $(k,1)$     \\
& &  1.53E+09 & 0.24 & 347 & $(k,2)$ Type 1     \\
& &  6.27E+08 & 0.47 & 298 & $(k,2)$ Type 2     \\
\hline
\end{tabular}
\caption{Table revised estimates, revised error and time in seconds for the counts of near-cliques obtained using \modifiedts{} with 500K samples for the erroneous estimates in \Tab{GraphStats} (marked with *). }
\label{tab:CorrectedValues}
\end{table}

\begin{figure*}
\begin{subfigure}[b]{0.30\textwidth}
    \centering
    \includegraphics[width=\textwidth]{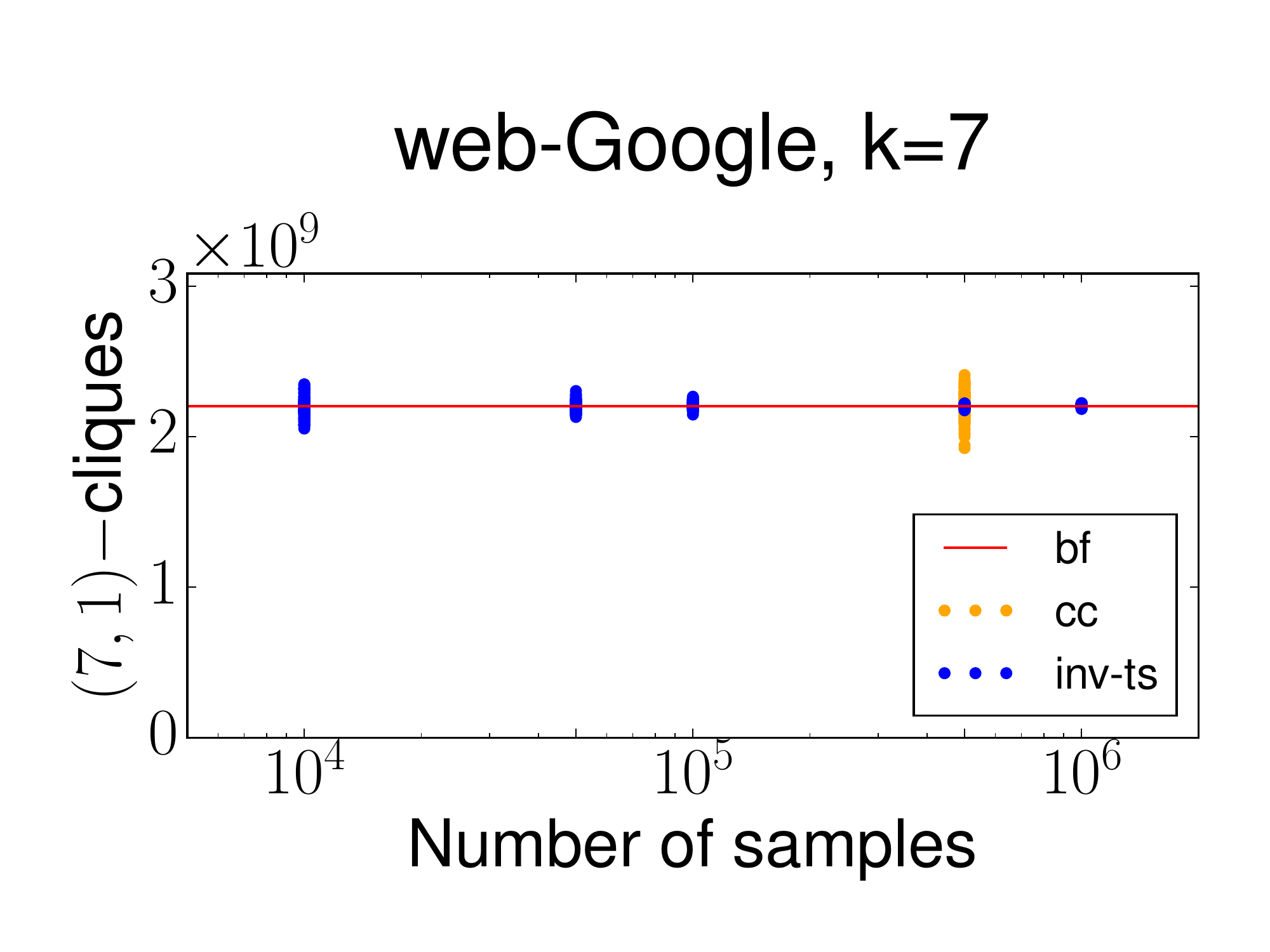}
    \caption{\koc{}}
    \label{fig:google-7-nc-convergence}
\end{subfigure}
\begin{subfigure}[b]{0.30\textwidth}
    \centering
    \includegraphics[width=\textwidth]{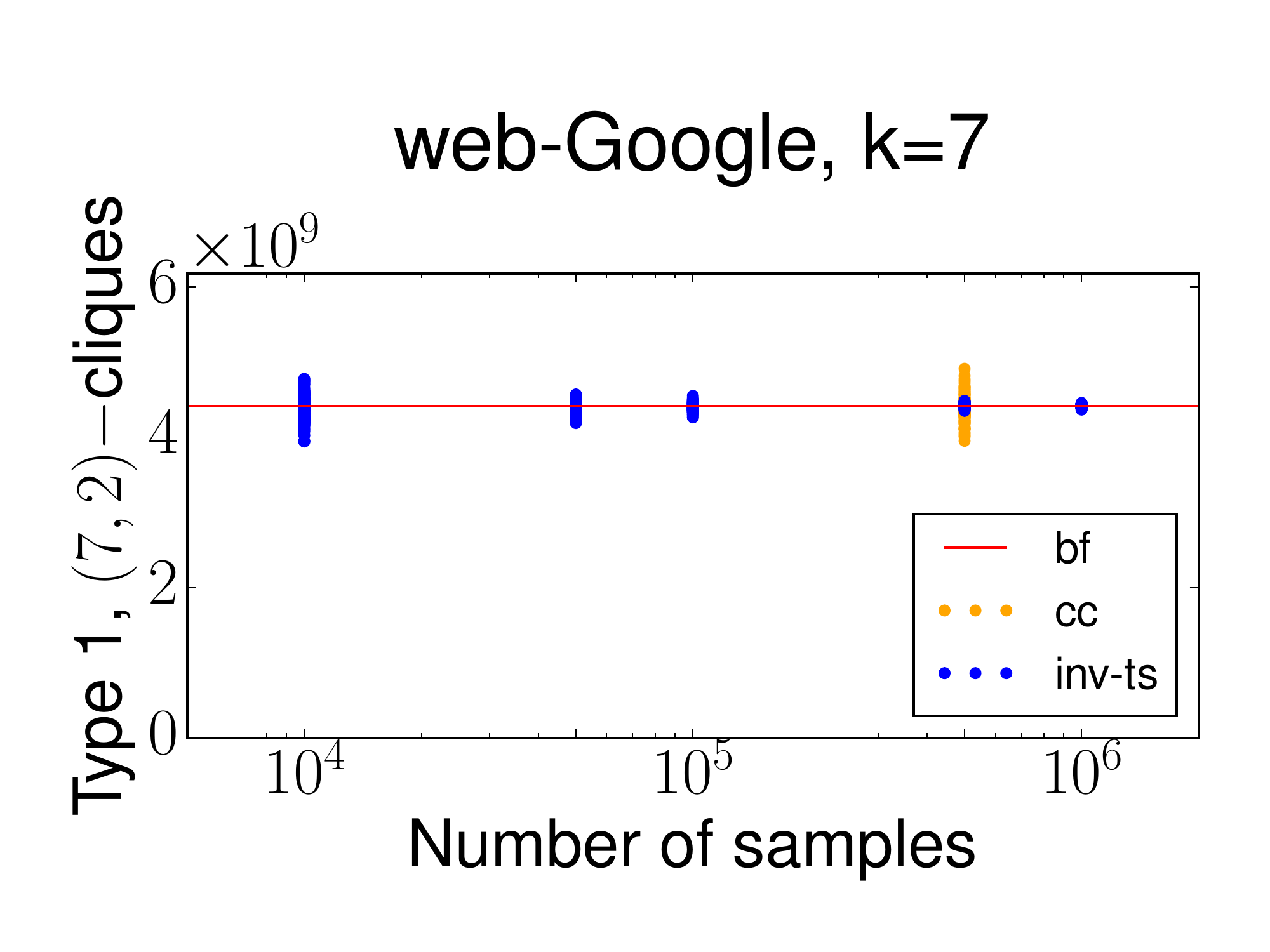}
    \caption{Type 1, \ktc{}}
    \label{fig:google-7-nc2-convergence}
\end{subfigure}
\begin{subfigure}[b]{0.30\textwidth}
    \centering
    \includegraphics[width=\textwidth]{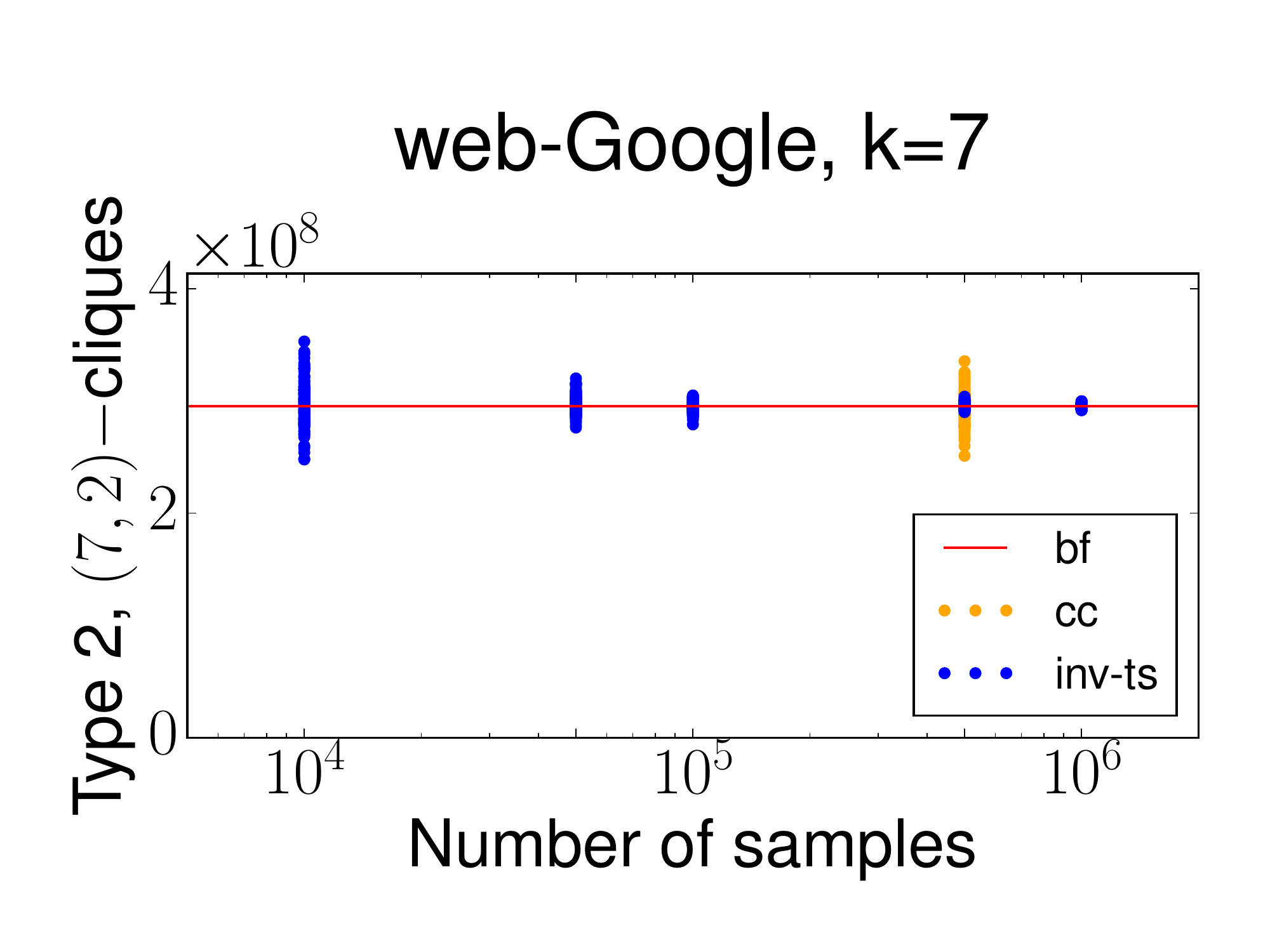}
    \caption{Type 2, \ktc{}}
    \label{fig:google-7-nc22-convergence}
\end{subfigure}
\caption{ \Fig{google-7-nc-convergence}, \Fig{google-7-nc2-convergence}, \Fig{google-7-nc22-convergence} show convergence over 100 runs of \mainalg{} using number of samples in [10K, 50K, 100K, 500K,1M] for all near-clique types. The red line indicates the true value. }

\label{fig:convergence}
\end{figure*}

\textbf{Running time:} The runtimes for near-cliques of size 7 are presented in \Tab{GraphStats}. We show the time for a single run in each case. In all cases except com-orkut, the algorithm terminated in minutes (for com-orkut, it took less than a day) where cc and bf did not terminate in an entire day (and in some cases, even after 5 days).

\begin{figure}[t]   
        \begin{subfigure}[b]{0.235\textwidth}
        \centering
        \includegraphics[width=\textwidth]{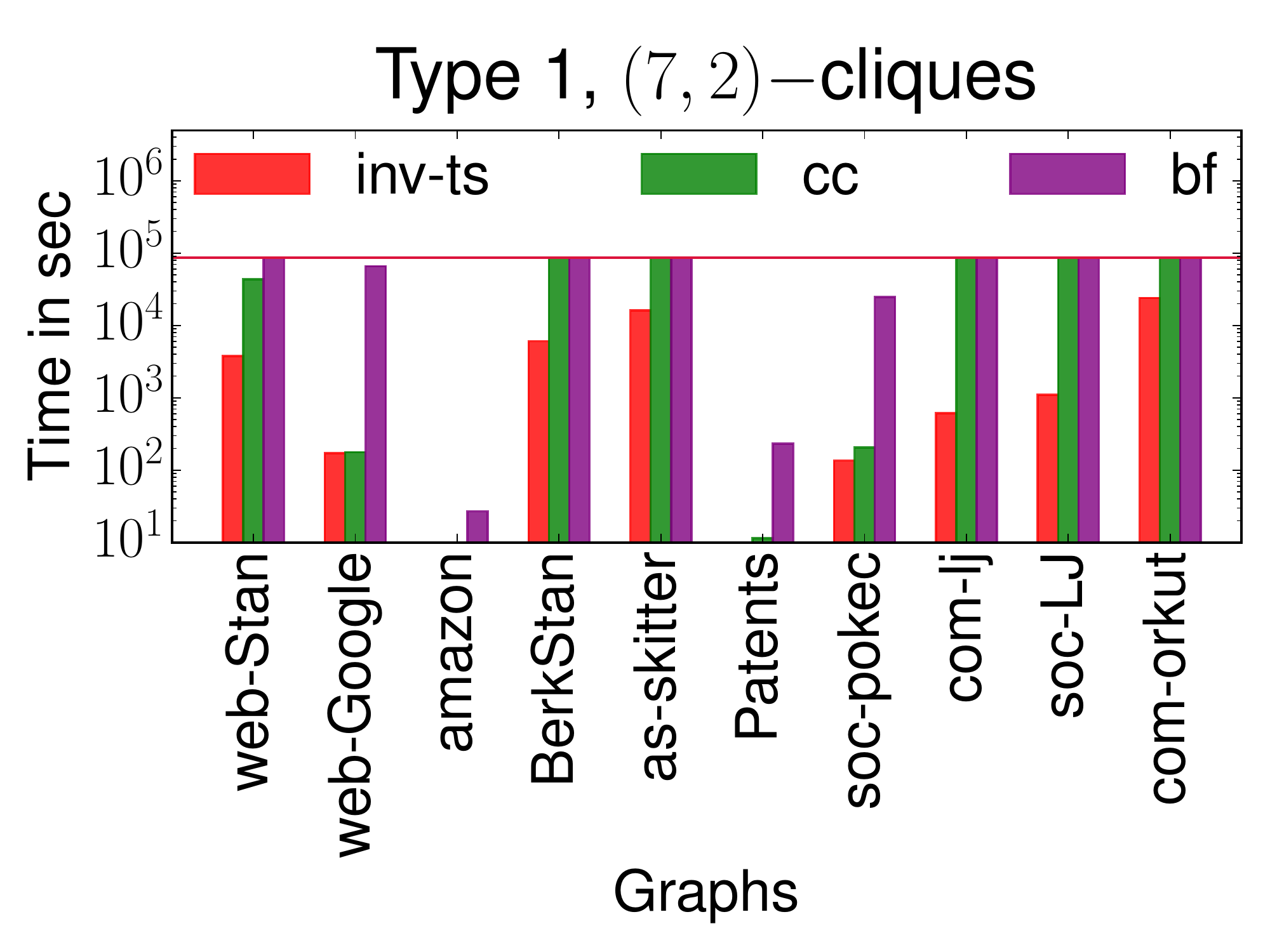}
        \label{fig:timing-7-nc2}
        \end{subfigure}
        \hfill
        \begin{subfigure}[b]{0.235\textwidth}
        \centering
        \includegraphics[width=\textwidth]{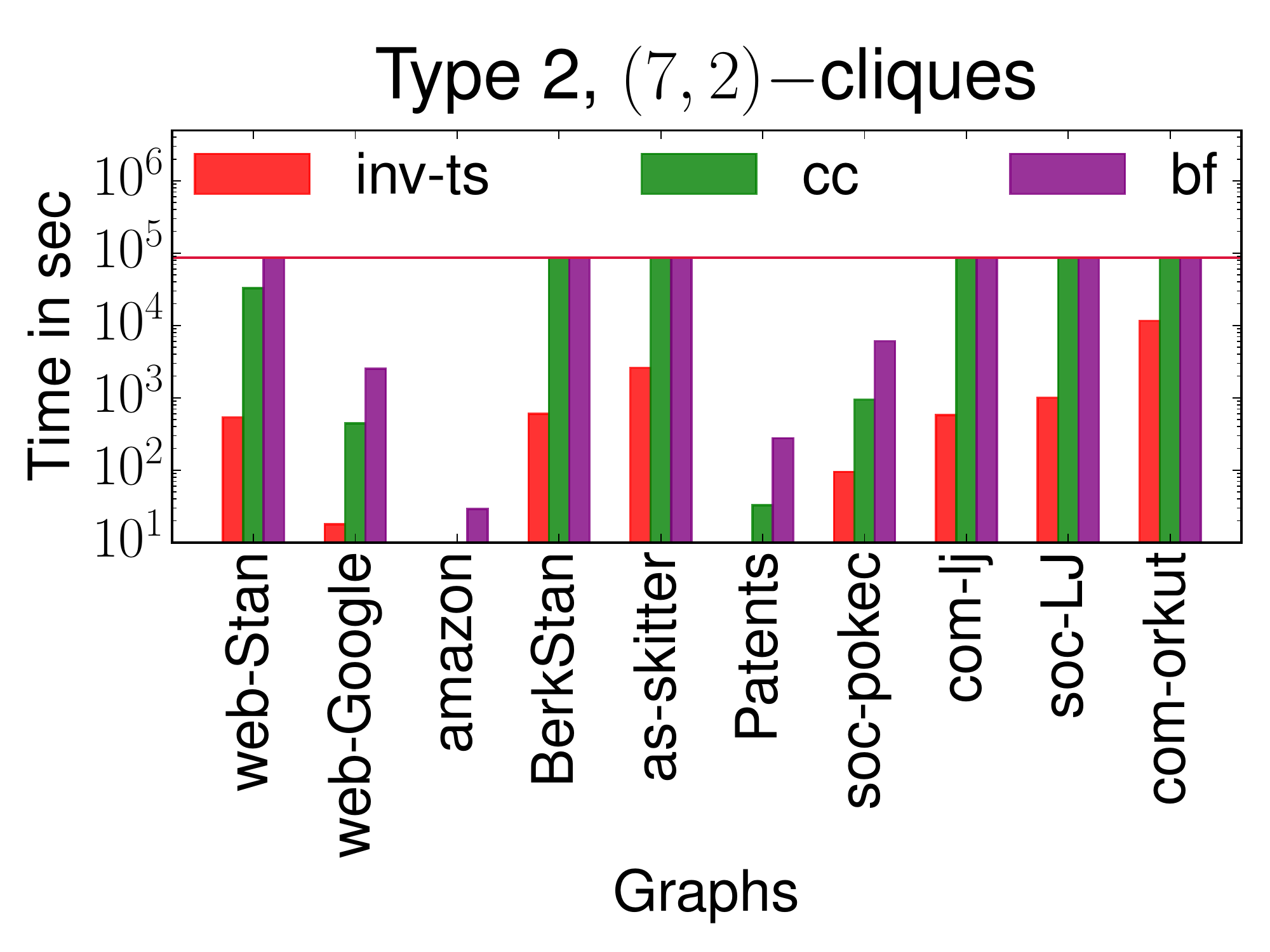}
        \label{fig:timing-7-nc22}
        \end{subfigure}
        \caption{ Figure shows the time required by \mainalg{} (inv-ts), color-coding (cc) and brute force (bf) to estimate the number of Type 1 and Type 2 \ktcs{} resp. in 10 real world graphs for $k=7$. The red line indicates 86400 seconds (24 hours). }
\end{figure}



\textbf{Comparison with other algorithms:}  
Our exact brute-force procedure is a
well-tuned algorithm that uses the degeneracy ordering and exhaustively searches outneighborhoods
for cliques (based on the approach by Chiba-Nishizeki~\cite{ChNi85}). Once a clique is found, we count all the near-cliques the clique is a part of and sum this quantity over all cliques.


On average, color-coding took time anywhere between 2x to 100x time taken by \mainalg{}, while giving poorer accuracy. Brute force took even more time. \mainalg{} has reduced the time required to obtain these estimates from days to minutes.

\subsection{Near-cliques in practice} \label{sec:practice}
One of the important applications of near-cliques is in finding missing edges that likely should have been present in the graph in the first place. We deployed our algorithm on a citation network\ ~\cite{Tang08}. Using \mainalg{} we were able to obtain several sets of papers in which, ever pair of paper either cited or was cited by the other paper (depending on the chronological order of the papers), except 1 or 2 pairs. For example, a $(7,1)$-clique we obtained comprised of the papers with the following titles: 
\begin{enumerate}
\item A ray tracing solution for diffuse interreflection 
\item Distributed ray tracing 
\item A global illumination solution for general reflectance distributions 
\item Adaptive radiosity textures for bidirectional ray tracing 
\item The rendering equation 
\item A two-pass solution to the rendering equation: A synthesis of ray tracing and radiosity methods 
\item A framework for realistic image synthesis 
\end{enumerate} 
 in which, only $(1)$ and $(3)$ were not connected. Thus, by mining near-cliques one can discover missing links and offer suggestions for which items should be related. In applications where the data is known to be noisy, it would be interesting to see how the properties of the graph change upon adding these (possibly) missing links and obtaining a more complete picture. 


 \textbf{Listing near-cliques:}
In some applications of \nkcs{}, a u.a.r. sample of \nkcs{} may be required. Suppose we want to provide a u.a.r. sample of Type 1 \ktcs{} for a given $k$. \modifiedts{} allows us to sample cliques u.a.r. Once a clique $K$ is sampled, suppose we return a u.a.r. Type 1 \ktc{} that $K$ participates in. Let $J$ be a Type 1 \ktc{} that $K$ participates in, then the probability of $J$ being returned is inversely proportional to $f(K)$. In other words, this approach does not give us a u.a.r. sample of Type 1 \ktcs{}. However, if we list all the Type 1 \ktcs{} that $K$ participates in, and repeat this process for several different $K$, even though the samples in the list may be correlated, every Type 1 \ktcs{} in $G$ has equal probability of being put in the list. In applications where some amount of correlation in samples is tolerable, such a list can be useful.

\section{Conclusion and Future Work}
We leverage the fast clique counting algorithm \ts{} to count near-cliques that are essentially $k$-cliques missing 1 or 2 edges, for $k$ upto 10. The proposed algorithm gives significant savings in space and time compared to state of the art. 

One could generalize the definition of near-cliques to larger values of $r$ and define a $(k, r)-$clique as a $k-$ clique that is missing exactly $r$ edges. It would be interesting to see how far $r$ can be increased such that near-clique counting would still be feasible using this clique-centered approach.

\begin{acks}
  Shweta Jain and C. Seshadhri acknowledge the support of \grantsponsor{}{NSF}{} Awards \grantnum{}{CCF-1740850}, \grantnum{}{CCF-1813165}, and \grantsponsor{}{ARO}{} Award \grantnum{}{W911NF1910294}.
\end{acks}

\bibliographystyle{ACM-Reference-Format}
\bibliography{WWW-mining-near-cliques}

\end{document}